\theoremstyle{plain}
\newtheorem{theorem}{Theorem}[section]
\newtheorem{lemma}[theorem]{Lemma}
\newtheorem{proposition}[theorem]{Proposition}
\newtheorem{corollary}[theorem]{Corollary}
\theoremstyle{definition}
\newtheorem{definition}[theorem]{Definition}
\newtheorem{example}[theorem]{Example}
\theoremstyle{remark}
\newtheorem{remark}[theorem]{Remark}
\newtheorem*{notation*}{Notation}
\newcommand{\F}{\mathbb{F}_2}
\newcommand{\Z}{\mathbb{Z}}
\newcommand{\R}{\mathbb{R}}
\newcommand{\N}{\mathbb{N}}
\newcommand{\GL}[1]{\mathrm{GL}(#1, \F)}
\newcommand{\argmax}{\operatorname*{arg\,max}}
\newcommand{\supp}{\operatorname{supp}}
\newcommand{\concat}{\mathbin{\|}}
\newcommand{\iso}{\simeq}
\newcommand{\ket}[1]{|#1\rangle}
\newcommand{\bra}[1]{\langle#1|}
\newcommand{\cnot}{\mathrm{CNOT}}
\newcommand{\NOT}{\mathrm{NOT}}
\newcommand{\rz}{R_Z}
\renewcommand{\vec}[1]{\boldsymbol{#1}}
\newcommand{\w}{\vec{w}}
\newcommand{\x}{\vec{x}}
\newcommand{\y}{\vec{y}}
\newcommand{\z}{\vec{z}}
\newcommand{\0}{\vec{0}}
\newcommand{\append}{::}
\newcommand{\fourier}[1]{\widehat{#1}}
\newcommand{\parity}[1]{\chi_{#1}}
\newcommand{\etal}{et al.\xspace}
\begin{document}

\tikzset{
    node style bb/.style={rectangle,fill=black!20},
    node style ba/.style={red,rectangle,fill=black!20},
    node style aa/.style={red}
}

\title{On the CNOT-complexity of CNOT-PHASE circuits}

\author{Matthew Amy$^{1,2}$, Parsiad Azimzadeh$^{2}$ and Michele Mosca$^{1,3,4,5}$}
\address{$^1$ Institute for Quantum Computing, University of Waterloo, Canada \\
	$^2$ David R. Cheriton School of Computer Science, University of Waterloo, Canada \\
	$^3$ Department of Combinatorics \& Optimization, University of Waterloo, Canada \\
	$^4$ Perimeter Institute for Theoretical Physics, Waterloo, Canada \\
	$^5$ Canadian Institute for Advanced Research, Toronto, Canada}

\begin{abstract}
We study the problem of $\cnot$-optimal quantum circuit synthesis over gate sets consisting of $\cnot$ and $Z$-basis rotations of arbitrary angles. We show that the circuit-polynomial correspondence relates such circuits to Fourier expansions of pseudo-Boolean functions, and that for certain classes of functions this expansion uniquely determines the minimum $\cnot$ cost of an implementation. As a corollary we prove that $\cnot$ minimization over $\cnot$ and phase gates is at least as hard as synthesizing a $\cnot$-optimal circuit computing a set of parities of its inputs. We then show that this problem is NP-complete for two restricted cases where all $\cnot$ gates are required to have the same target, and where the circuit inputs are encoded in a larger state space. The latter case has applications to $\cnot$ optimization over more general Clifford+$T$ circuits.

We further present an efficient heuristic algorithm for synthesizing circuits over $\cnot$ and $Z$-basis rotations with small $\cnot$ cost. Our experiments show a 23\% reduction of $\cnot$ gates on average across a suite of Clifford+$T$ benchmark circuits, with a maximum reduction of 43\%.
\end{abstract}

\maketitle

\section{Introduction}

The two-qubit controlled-$\NOT$ ($\cnot$) gate forms the backbone of most discrete quantum circuits. As the only entangling operation -- and, moreover, the only two-qubit operation -- in most common gate sets, it is used judiciously in effectively any practically useful quantum circuit. Even in physical implementations where gates commonly have tunable parameters many use the $\cnot$ gate, or a $\cnot$ gate up to single qubit rotations, as the two qubit entangling gate (e.g. \cite{slflwm16}). 

Due to the low cost of the controlled-$\NOT$ gate in most fault tolerant models \cite{fsg09, g98-2} compared to single-qubit gates implemented with resource states -- for instance, the $T$ gate in the surface code or the Hadamard gate in the 15-qubit Reed-Muller code -- reducing the number of $\cnot$ gates is typically regarded as a secondary optimization objective. In some cases, optimizations aimed at reducing $T$-count or $T$-depth (the number of layers of parallel $T$ gates in a circuit) result in a massive increase in $\cnot$ count \cite{amm14}. While in many contexts the savings obtained by optimizing $T$ gates or other such gates makes up for the explosion in $\cnot$ gates, it is nevertheless desirable to find optimization methods which can mitigate this increase, as even in fault tolerant models $\cnot$ gates incur non-negligible cost.

Direct optimization of $\cnot$ gates has generally been studied only in the restricted case of \emph{reversible} circuits. Synthesis and optimization methods have been developed for the subset of \emph{linear} reversible functions, which are exactly the circuits implementable with just $\cnot$ gates \cite{iky02, pmh08}, but the vast majority of optimizations apply to the NCT gate set, consisting over $\NOT$, $\cnot$ and Toffoli gates, or its generalization to multiply-controlled Toffoli gates. In the former case, additional reductions may be possible between distinct linear reversible circuits separated by other quantum gates. Likewise, further reductions are typically possible between Toffoli or multiply-controlled Toffoli gates once they have been expanded into suitable quantum gate sets \cite{aads16}.

In this work, we consider the problem of minimizing $\cnot$ count in the presence of other quantum gates, specifically phase gates (single qubit $Z$-basis rotations). By the \emph{circuit-polynomial correspondence} \cite{am16, dhmhno05,m17}, such circuits are known to correspond to weighted sums of parity functions called \emph{phase polynomials} \cite{amm14}. 

Using this correspondence, we introduce the problem of synthesizing a \emph{parity network}, a $\cnot$ circuit which computes a set of parities, possibly non-simultaneously. Computing a minimal parity network for a particular set of parity functions is shown to be equivalent to finding a $\cnot$-optimal circuit \emph{for a particular phase polynomial}, and to be computationally \emph{easier} than the full $\cnot$ optimization problem over $\cnot$ and phase gates. This problem is then shown to be NP-complete in two restricted cases: when all $\cnot$ gates are restricted to the same target bit, and when the $m$ circuit inputs are encoded in the state space of $n>m$ qubits. The former case provides evidence for the hardness of computing minimal parity networks, while the latter case is useful when optimizing $\cnot$ counts in Clifford+T circuits.

We further devise a new heuristic optimization algorithm for $\cnot$-phase circuits by synthesizing parity networks. The optimization algorithm is inspired by \emph{Gray codes} \cite{g53}, which cycle through the set of $n$-bit strings using the exact minimal number of bit flips. Like Gray codes, our algorithm achieves the minimal number of $\cnot$ gates when all $2^n$ parities are needed. To test our algorithm, we implemented it as a replacement for the $T$-parallelization sub-routine in the $T$-par framework \cite{amm14} to apply our optimization over general Clifford+$T$ circuits -- our experiments show a reduction in $\cnot$ count of 23\% on average.

The paper is organized as follows. Section~2 introduces the circuit-polynomial correspondence for quantum circuits composed of $\cnot$ and phase gates, as well as the minimal parity network synthesis problem, then further shows that the general problem of $\cnot$-minimization over $\cnot$ and phase gates is at least as hard as the parity network synthesis problem. Section~3 studies the complexity of this synthesis problem and shows that the fixed-target and encoded input cases are NP-complete. In Section~4 we present our heuristic $\cnot$ minimization algorithm, and Section~5 gives experimental results.

\subsection{Preliminaries}

We give a brief overview of the basics of quantum circuits and our notation throughout.

In this paper we work in the circuit model of quantum computation \cite{nc00}. The state of an $n$-qubit quantum system is taken as a unit vector in a dimension $2^n$ complex vector space. As is standard we denote the $2^n$ basis vectors of the \emph{computational} basis by $\ket{\x}$ for bit strings $\x\in\F^n$ -- these are called the \emph{classical} states. A general quantum state may then be written as a \emph{superposition} of classical states
\[
  \ket{\psi} = \sum_{\x\in\F^n} \alpha_{\x}\ket{\x},
\]
for complex $\alpha_{\x}$ and having unit norm. The states of two $n$ and $m$ qubit quantum systems $\ket{\psi}$ and $\ket{\phi}$ may be combined into an $n+m$ qubit state by taking their tensor product $\ket{\psi}\otimes\ket{\phi}$ -- in particular, $\ket{\x}\otimes \ket{\y} = \ket{\x \concat \y}$ where $\concat$ denotes the concatenation of two binary vectors or strings. If to the contrary the state of two qubits cannot be written as a tensor product the two qubits are said to be \emph{entangled}. We write $|\x|$ to denote the \emph{Hamming weight} of the binary string $\x$.

Quantum circuits, in analogy to classical circuits, carry qubits from left to right along \emph{wires} through \emph{gates} which transform the state. In the unitary circuit model gates are required to implement unitary operators on the state space -- that is, quantum gates are modelled by complex-valued matrices $U$ satisfying $UU^\dagger = U^\dagger U = I$, where $U^\dagger$ is the complex conjugate of $U$. As a result, such quantum computations must be \emph{reversible}, and in particular, the set of classical functions computable by a quantum circuit is the set of invertible, $n$ bit to $n$ bit functions. To perform general (irreversible) classical functions, extra bits called \emph{ancillae} are typically needed to implement them reversibly. We use $U_C$ to denote the unitary operator corresponding to the circuit $C$ and $C \append C'$ to denote the circuit obtained by appending $C'$ to the end of $C$.

We will primarily be interested in two quantum gates in this paper, the controlled-NOT gate ($\cnot$) which (as a function of classical states) inverts its second argument conditioned on the value of its first, and the $Z$-basis rotation $\rz(\theta)$ which applies a \emph{phase shift} of $e^{i\theta}$ conditioned on its argument. Specifically, we define
\[
  \cnot\ket{x}\ket{y} = \ket{x}\ket{x\oplus y}, \qquad \rz(\theta)\ket{x} = e^{2\pi i\theta x}\ket{x},
\]
for $x,y\in\F$. By the linearity of quantum computation a unitary operator is fully defined by its effect on the computational basis states. Other common quantum gates include the NOT gate $X\ket{x} = \ket{1\oplus x}$ and the Hadamard gate $H\ket{x} = \frac{1}{\sqrt{2}}\sum_{y\in\F}(-1)^{xy}\ket{y}$. The $\cnot$ gate together with the Hadamard gate and $S=\rz\left(\frac{1}{4}\right)$ generate the \emph{Clifford} group, a set of quantum operations particularly important in quantum error correction and for which there are efficient implementations in many fault-tolerant schemes. While the Clifford group alone is not \emph{universal} for quantum computing, in the sense that not every unitary matrix can be approximated to arbitrary accuracy by a Clifford group circuit, adding the $T=\rz\left(\frac{1}{8}\right)$ gate to the Clifford group gives a universal gate set. The Clifford group combined with the $T$ gate is typically referred to as the \emph{Clifford+$T$} gate set. We say a circuit is written over a particular set of gates if the circuit only uses gates from that set.

\subsection{Related work}

Previous work regarding $\cnot$ optimization has largely focussed on strictly reversible circuits. Iwama, Kambayashi and Yamashita \cite{iky02} gave some transformation rules which they use to normalize and optimize $\cnot$-based circuits. More specific to $\cnot$ circuits, Patel, Markov and Hayes \cite{pmh08} gave an algorithm for synthesizing linear reversible circuits which produces circuits of asymptotically optimal size. Their method modifies Gaussian elimination by prioritizing rows which are close in Hamming distance and gives circuits of size at most $O(n^2/\log n)$, coinciding with the known lower bound of $\Theta(n^2/\log n)$ on the worst-case size of $\cnot$ circuits \cite{spmh02}. 

In the realm of pure quantum circuits, Shende and Markov \cite{sm09} studied the $\cnot$ cost of Toffoli gates. They proved that $6$ $\cnot$ gates is minimal for the Toffoli gate, and gave a lower bound of $2n$ $\cnot$ gates for the $n$ qubit Toffoli gate. More recently, Welch, Greenbaum, Mostame and Aspuru-Guzik \cite{wgma14} studied the construction of efficient circuits for diagonal unitaries. They used similar insights to the ones we use here, notably the use of the $2^n$ Walsh functions as a basis for $n$-qubit diagonal operators which correspond to the parity functions in the \emph{Fourier expansions} \cite{od14} we use to describe $\cnot$-phase circuits. While their main objective was to optimize circuits by constructing approximations of the operator which use fewer Walsh functions, they give a construction of an optimal circuit computing all $2^n$ Walsh functions. They further give $\cnot$ identities which they use to optimize circuits when not all Walsh functions are used, but give no experimental data as to the effectiveness of these optimizations. In contrast, we present and test an algorithm which directly synthesizes an efficient circuit for a specific set of Walsh functions, rather than construct a circuit for the full set and optimize later.

\section{CNOT-phase circuits and Fourier expansions}

We begin by introducing the \emph{circuit-polynomial} correspondence \cite{m17} for quantum circuits composed of $\cnot$ and $\rz$ gates, which associates a \emph{phase polynomial} and a linear Boolean transformation with every circuit. While the $\{\cnot, \rz\}$ gate set does not contain any branching gates, we call this the \emph{sum-over-paths} form of a circuit in keeping with similar analyses \cite{dhmhno05, kps17, m17}.

\begin{definition}\label{defn:sop}
The sum-over-paths form of a circuit $C$ over $\cnot$ and $\rz$ gates with associated unitary $U_C$ is given by a pseudo-Boolean function
\[
  f(\x) = \sum_{\y\in\F^n} \fourier{f}(\y) \cdot (x_1y_1\oplus x_2y_2 \oplus \cdots \oplus x_ny_n) 
\]
with coefficients $\fourier{f}(\y)\in\R$, together with a basis state transformation $A\in\GL{n}$ such that
\[
  U_C = \sum_{\x\in\F^n}e^{2\pi i f(\x)}\ket{A\x}\bra{\x}.
\]
In particular, $U_C$ maps the basis state $\ket{\x}$ to $e^{2\pi i f(\x)}\ket{A\x}$. We refer to the sum-over-paths form as the pair $(f, A)$.
\end{definition}

The expression of $f(\x)$ as a weighted sum of parities of $\x$ in \cref{defn:sop} is the \emph{Fourier expansion}\footnote{Standard literature (e.g. \cite{od14}) on Fourier analysis of pseudo-Boolean functions uses the multiplicative group $\{-1, 1\}$ as the Boolean group, resulting in a different set of coefficients. As the additive group $\{0,1\}$ is more natural for quantum computing and the resulting expansion obeys the properties needed, we use this representation. A full discussion can be found in \cref{app:fourier}.} of $f$ with \emph{Fourier coefficients} $\fourier{f}(\y)$ \cite{od14}. We call the collective set of Fourier coefficients the \emph{Fourier spectrum} or just the spectrum of $f$. Previous works \cite{amm14, am16, ch17} have used the term \emph{phase polynomial} to refer to the expression of $f(\x)$ over the parity basis -- we use the term Fourier expansion to avoid confusion with regular polynomial representations of $f$.

For convenience we define $\parity{\y}:\F^n \rightarrow \F$ to be the parity function for the indicator vector $\y\in\F^n$. In particular, 
\[
  \parity{\y}(\x) = x_1y_1 \oplus x_2y_2 \oplus \cdots \oplus x_ny_n.
\]
The Fourier expansion of $f$ is then written as
\[
  f(\x) =  \sum_{\y\in\F^n} \fourier{f}(\y)\parity{\y}(\x).
\]
We also define the \emph{support} of the Fourier expansion of $f$ to be the set of parities with non-zero coefficients -- that is,
\[
  \supp(\fourier{f}) =  \{\y \in\F^n \mid \fourier{f}(\y)\neq 0\}.
\]

It was previously shown by Amy, Maslov and Mosca \cite{amm14} that there exists a canonical sum-over-paths expression for every $\{\cnot, T\}$ circuit, and in particular it can be computed in time polynomial in the number of qubits. The same method may be used to compute a canonical sum-over-paths form for any circuit over $\{\cnot, \rz\}$, with the only difference being that the Fourier coefficients are defined over $\R$ rather than $\Z_8$. 

\begin{proposition}\label{prop:canon}
Every circuit $C$ written over $\{\cnot, \rz\}$ has a canonical, polynomial-time computable sum-over-paths form.
\end{proposition}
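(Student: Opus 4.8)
The plan is to proceed by induction on the number of gates in $C$, tracking the sum-over-paths form $(f, A)$ through a left-to-right pass over the circuit. The base case is the empty circuit, which has $f \equiv 0$ (equivalently, empty Fourier spectrum) and $A = I$, so that $U_C = \sum_{\x} \ket{\x}\bra{\x}$ as required. For the inductive step, suppose $C = C' \append g$ where $C'$ has sum-over-paths form $(f', A')$ and $g$ is a single gate; I will show how to obtain $(f, A)$ for $C$ and argue that the update is computable in time polynomial in $n$.

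The key computation is the case analysis on $g$. If $g = \cnot$ with control $i$ and target $j$, then appending it post-composes $U_{C'}$ with the linear map that adds coordinate $i$ into coordinate $j$; concretely, $U_C = \sum_{\x} e^{2\pi i f'(\x)} \ket{E_{ij} A' \x}\bra{\x}$ where $E_{ij} \in \GL{n}$ is the corresponding elementary transvection. Hence $f = f'$ (the Fourier spectrum is unchanged) and $A = E_{ij} A'$, a single row operation on the matrix. If $g = \rz(\theta)$ acting on qubit $j$, then $g$ multiplies $\ket{A'\x}$ by $e^{2\pi i \theta (A'\x)_j}$, where $(A'\x)_j = \parity{\vec{a}}(\x)$ and $\vec{a}$ is the $j$-th row of $A'$. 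Thus $A = A'$ is unchanged and $f = f' + \theta \parity{\vec{a}}$, i.e.\ we add $\theta$ to the Fourier coefficient $\fourier{f}(\vec{a})$. Since $\vec{a}$ is just a row of the current matrix, this update is immediate. In both cases the representation stays polynomial-size — $A$ is an $n\times n$ matrix over $\F$, and the Fourier spectrum has at most one nonzero coefficient per $\rz$ gate — and each update costs $O(n)$ or $O(n^2)$ time, so the total is polynomial in the circuit size (hence polynomial in $n$ for a fixed circuit). To see the resulting $(f,A)$ is a valid sum-over-paths form per \cref{defn:sop}, note that $f$ as constructed is by definition a real combination of parity functions and $A$ is a product of transvections, hence in $\GL{n}$, and the identity $U_C = \sum_{\x} e^{2\pi i f(\x)}\ket{A\x}\bra{\x}$ holds by the inductive composition just described.

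For canonicity, I would fix the convention that the Fourier spectrum is stored as the formal sum over $\supp(\fourier{f})$ with the $\R$-valued coefficients reduced to a canonical representative (for instance, taken in $[0,1)$ since $e^{2\pi i f(\x)}$ only depends on $f(\x) \bmod 1$, dropping any parity whose coefficient becomes $0$), together with the matrix $A$. Two circuits inducing the same unitary of the form $\sum_\x e^{2\pi i f(\x)}\ket{A\x}\bra{\x}$ then yield the same pair: $A$ is recovered as the permutation-free linear action on basis states, and the $f(\x) \bmod 1$ values are recovered as the phases, which determine the reduced Fourier coefficients uniquely because the parity functions $\{\parity{\y}\}_{\y\in\F^n}$ are linearly independent as functions $\F^n \to \R/\Z$. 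This gives a well-defined normal form depending only on $U_C$, so the construction above, followed by this reduction, produces the canonical form.

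The main obstacle is not any single step — each is a routine linear-algebra or bookkeeping update — but rather being careful about \emph{what} ``canonical'' should mean over $\R$ as opposed to $\Z_8$: unlike the $\{\cnot, T\}$ case where coefficients live in the finite group $\Z_8$ and equality is unambiguous, here I must be explicit that coefficients are only meaningful modulo $1$ and fix a representative, and I must confirm that the uniqueness argument (linear independence of parities over $\R/\Z$) still goes through. Once that convention is pinned down, the proof is a direct induction mirroring the $\{\cnot, T\}$ argument of Amy, Maslov and Mosca, with $\Z_8$ replaced by $\R$ (or $\R/\Z$) throughout.
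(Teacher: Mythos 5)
Your inductive, gate-by-gate construction is exactly the paper's construction in different clothing: the rows of the running matrix $A'$ are precisely the paper's circuit annotations, your $\cnot$ update $A \gets E_{ij}A'$ and your $\rz(\theta)$ update (add $\theta$ to $\fourier{f}(\vec{a})$, where $\vec{a}$ is the target's row of $A'$) are the annotation rules, and the polynomial-time accounting is the same. That part is correct and is all the proposition needs.

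The canonicity paragraph, however, contains a genuine error. You argue that, after reducing coefficients modulo $1$, the spectrum is recoverable from the phases of $U_C$ because the parities $\{\parity{\y}\}_{\y\in\F^n}$ are ``linearly independent as functions $\F^n \to \R/\Z$.'' They are not: the identity $x + y - (x\oplus y) = 2xy$ gives $\tfrac12 x_1 + \tfrac12 x_2 - \tfrac12(x_1\oplus x_2) = x_1x_2 \in \Z$, so $\tfrac12 x_1 + \tfrac12 x_2$ and $\tfrac12(x_1\oplus x_2)$ are equal as functions into $\R/\Z$ yet have distinct Fourier spectra whose coefficients already lie in $[0,1)$. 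Hence your proposed normal form is not determined by $U_C$, and no unitary-level canonical form is intended here: the paper's example preceding \cref{fig:SOP} (a single $\rz(1/2)$ conjugated by two $\cnot$s versus two $\rz(1/2)$ gates) gives two circuits with the same unitary whose canonical sum-over-paths forms differ, and \cref{prop:equiv,prop:groupelems,prop:fouriersupport} exist precisely because the spectrum of $f$ is only pinned down within the equivalence class $[f]$ when the coefficients are non-dyadic. ``Canonical'' in the proposition means canonical \emph{per circuit} -- the construction assigns each circuit a well-defined pair $(f,A)$, with $\fourier{f}(\y)$ the sum of all rotation angles incident on the parity $\parity{\y}$ -- which your induction already delivers. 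The fix is to drop the claim that the form depends only on $U_C$ (and with it the false independence argument), not to try to repair it.
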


Given a $\{\cnot, \rz\}$ circuit, its sum-over-paths form can be computed by first constructing the \emph{annotated circuit}, where the inputs are labelled by $x_1, x_2,\dots, x_n$, and outputs of every gate are labelled by a parity of the inputs. Note that by the linearity of quantum circuits and the fact that both $\cnot$ and $\rz$ gates map basis states to basis states (possibly with a phase), the state of each qubit can in fact be described at any point in the circuit as a parity over the input (basis) state. As the $\cnot$ gate maps $\ket{x}\ket{y}$ to $\ket{x}\ket{x\oplus y}$, the output for the control bit of a $\cnot$ gate has the same label as its input, while the target bit is labelled with the XOR of the input labels. An $\rz(\theta)$ gate does not change the basis state of the qubit, and hence the annotation is unchanged. 

Then to construct the sum-over-paths form, for every $\rz(\theta)$ gate with incoming label $\parity{\y}(\x)$ a factor of $\theta\cdot \parity{\y}(\x)$ is added to $f(\x)$ -- equivalently, $\fourier{f}(\y)$ is taken to be the sum of the parameters ($\theta$) of all $\rz$ gates with incoming label $\parity{\y}(\x)$. The linear transformation $A$ is defined by the mapping $\x \mapsto \x'$ where $\x'$ is the labels at the end of the circuit.

\begin{figure}
\centerline{\scriptsize
	\Qcircuit @C=.8em @R=.6em {
		\lstick{x_1} & \gate{\rz\left(\frac{1}{8}\right)} & \targ & 
			\ustick{\:x_1\oplus x_3}\qw & \push{\rule{0em}{1em}}\qw & 
			\gate{\rz\left(\frac{3}{8}\right)} & \targ & 
			\ustick{\;\;\;\;\;\;x_1\oplus x_2\oplus x_3} \qw & 
			\push{\rule{0em}{1em}}\qw & \push{\rule{0em}{1em}}\qw & \push{\rule{0em}{1em}}\qw & \push{\rule{0em}{1em}}\qw &
			\gate{\rz\left(\frac{1}{8}\right)} & \targ & 
			\ustick{\;\;\;x_1\oplus x_2} \qw & \push{\rule{0em}{1em}}\qw & \push{\rule{0em}{1em}}\qw &
			\gate{\rz\left(\frac{3}{8}\right)} &
			\targ & \ustick{\!\!\!\!\!\!\!x_1}\qw & \rstick{\!\!\!\! x_1} \\
		\lstick{x_2} & \gate{\rz\left(\frac{1}{8}\right)} & \qw & 
			\ctrl{1} & \push{\rule{0em}{1em}}\qw & 
			\qw & \ctrl{-1} & 
			\ctrl{1} & \qw & \push{\rule{0em}{1em}}\qw & \push{\rule{0em}{1em}}\qw & \push{\rule{0em}{1em}}\qw &
			\qw & \qw & 
			\qw & \push{\rule{0em}{1em}}\qw &\push{\rule{0em}{1em}}\qw & \qw &
			\ctrl{-1} & \qw & \rstick{\!\!\!\! x_2} \\
		\lstick{x_3} & \qw & \ctrl{-2} & 
			\targ & \ustick{\:\:\:\:\:x_2\oplus x_3}\qw & 
			\push{\rule{0em}{1em}}\qw & \gate{\rz\left(\frac{3}{8}\right)} & 
			\targ & \ustick{\!\!\!\!\!\!x_3}\qw & \push{\rule{0em}{1em}}\qw & \push{\rule{0em}{1em}}\qw & \push{\rule{0em}{1em}}\qw &
			\qw & \ctrl{-2} & \qw & \push{\rule{0em}{1em}}\qw &  \push{\rule{0em}{1em}}\qw &
			\gate{\rz\left(\frac{1}{8}\right)} & \qw & \qw & \rstick{\!\!\!\! x_3}
	}
}
\caption{An annotated circuit implementing the doubly controlled $Z$ gate.}
\label{fig:ccZ}
\end{figure}
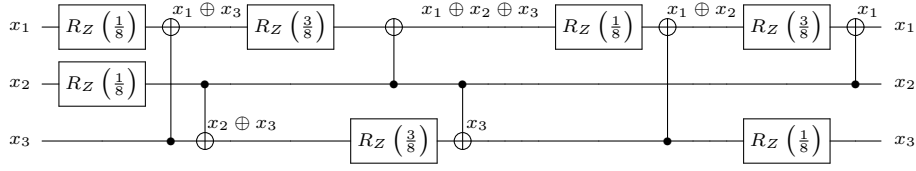

As an example, \cref{fig:ccZ} shows a circuit implementing the doubly controlled $Z$ operator
\[
  \Lambda_2(Z) = \sum_{\x\in\F^3} e^{2\pi i \frac{x_1x_2x_3}{2}}\ket{\x}\bra{\x}.
\]
The state of each qubit as a parity of an input basis state $\ket{x_1x_2x_3}$, $x_1,x_2,x_3\in\F$, is annotated after each gate. Annotations are only shown in cases where the state of the qubit is changed for clarity. Summing up the phase rotations due to $\rz$ gates gives the expression
\[
  f(\x) = \frac{1}{8}\left(x_1 + x_2 + 3(x_1\oplus x_3) + 3(x_2\oplus x_3) +
  (x_1\oplus x_2\oplus x_3) + 3(x_1\oplus x_2) + x_3 \right).
\]
As the circuit returns each qubit to its initial state, the sum-over-paths form is $(f, I)$ where $I\in\GL{3}$ is the identity matrix.

\subsection{Parity networks}

A key observation about the canonical sum-over-paths forms is that \emph{only parities which appear in the annotated circuit may have non-zero Fourier coefficient}. Otherwise, the parities may appear in any order, parities may appear in the circuit but not in the sum-over-paths form, or the same parity may appear multiple times. Multiple $\rz$ gates may also be applied with the same incoming parity throughout a circuit, in which case the Fourier coefficient is the sum of all the rotation angles and can be replaced with a single $\rz$ gate -- this effect was previously used to optimize $T$-count in Clifford+$T$ circuits by merging $T$ gates applied to the same parity \cite{amm14}.

The inverse of the above observation is that \emph{a circuit in which every parity in $\supp(\fourier{f})$ appears as an annotation can be modified to implement the phase rotation $f(\x) = \sum_{\y\in\F^n}\fourier{f}(\y)\parity{\y}(\x)$ at no extra $\cnot$ cost}. For example, the circuit in \cref{fig:ccZ} can be modified to give a new circuit with (non-equivalent) sum over paths form $(f', I)$ for 
\[
  f'(\x) = \frac{2}{3}\left(x_2\oplus x_3\right) + \frac{1}{3}\left(x_1\oplus x_2\oplus x_3\right)
\]
with no additional $\cnot$ gates simply by changing the parameters of the fourth and fifth $\rz$ gates to $\frac{2}{3}$ and $\frac{1}{3}$, respectively, and removing all other phase gates. The resulting circuit is shown below:
\[
\centerline{\scriptsize
	\Qcircuit @C=.8em @R=.6em {
		 & \targ &
			\qw & \targ &
			\gate{\rz\left(\frac{1}{3}\right)} & \targ & 
			\targ & \qw  \\
		 & \qw & 
			\ctrl{1} & \ctrl{-1} & 
			\ctrl{1} & \qw &
			\ctrl{-1} & \qw  \\
		 & \ctrl{-2} & 
			\targ & 
			\gate{\rz\left(\frac{2}{3}\right)} & 
			\targ & \ctrl{-2} & 
			\qw & \qw 
	}
}
\]
This motivates the definition of a \emph{parity network} below as a $\cnot$ circuit computing a set of parities, which can be used to implement phase rotations with Fourier expansions having support contained in that set.

\begin{definition}

A \emph{parity network} for a set $S\subseteq\F^n$ is an $n$-qubit circuit $C$ over $\cnot$ gates where, for each $\y\in S$, the parity $\parity{\y}(\x)$ appears in the annotated circuit. 
\end{definition}

As all parity networks apply some overall linear transformation of the input, we say a parity network is \emph{pointed at} $A\in\GL{n}$ if the overall transformation is $A$, i.e.,
\[
  U_{C} = \sum_{\x\in\F^n} \ket{A\x}\bra{\x}.
\]
For convenience we refer to a parity network with the trivial transformation as an \emph{identity parity network}. In the context of synthesizing parity networks, we use the term \emph{pointed parity network} to refer to a parity network applying a specific linear transformation.

We can now formalize the above observations with the following proposition, stating that the problem of finding a minimal size (pointed) parity network is equivalent to finding a $\cnot$-minimal circuit having a particular sum-over-paths form. For the remainder of the paper we consider these problems interchangeable.

\begin{proposition}\label{prop:skeleton-equiv}
Given a circuit $C$ over $\{\cnot, \rz\}$ with sum-over-paths form $(f, A)$, the circuit $C'$ obtained from $C$ by removing all phase gates is a parity network for $\supp(\fourier{f})$ pointed at $A$. 

Furthermore, given parity network $C$ for $S\subseteq\F^n$ pointed at $A$, the circuit $C'$ obtained from $C$ by, for every $\y\in S$, inserting a phase gate $\rz(\fourier{f}(\y))$ where $\parity{\y}(\x)$ appears as an annotation, has sum-over-paths form $(f, A)$.
\end{proposition}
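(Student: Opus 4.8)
The plan is to lean on the single structural fact behind \cref{prop:canon}: an $\rz(\theta)$ gate sends every computational basis state $\ket{z}$ to $e^{2\pi i\theta z}\ket{z}$, so it acts as the identity on the basis-state (linear) part of a circuit's action and contributes only a phase. Consequently, inserting or deleting $\rz$ gates never alters the $\cnot$ structure of a circuit, hence never alters the annotated circuit or the overall linear transformation, and only changes the accumulated phase. Both halves of the proposition will fall out of this observation together with the explicit description of the annotated circuit preceding \cref{prop:canon}.

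For the first statement I would argue that the annotated circuit of $C'$ is literally the annotated circuit of $C$ with the (state-preserving) $\rz$ annotations deleted: the controls, targets, and order of every $\cnot$ gate are unchanged, and the annotation rule only references $\cnot$ gates. By the construction behind \cref{prop:canon}, $\fourier{f}(\y)$ is the sum of the parameters of all $\rz$ gates of $C$ whose incoming label is $\parity{\y}$, so $\fourier{f}(\y)\neq 0$ forces at least one such gate to exist, i.e.\ $\parity{\y}(\x)$ occurs as an annotation in $C$, hence in $C'$. This shows $C'$ is a parity network for $\supp(\fourier{f})$. Since the overall transformation $A$ is the composition, in order, of the elementary linear maps contributed by the $\cnot$ gates of $C$ (the $\rz$ gates contributing the identity), removing the $\rz$ gates leaves this composition unchanged, so $U_{C'} = \sum_{\x}\ket{A\x}\bra{\x}$ and $C'$ is pointed at $A$.

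For the second statement, fix for each $\y\in S$ a gate of $C$ after which $\parity{\y}(\x)$ occurs as an annotation on some wire $j$, and insert $\rz(\fourier{f}(\y))$ immediately after it on wire $j$; distinct $\y$ give distinct parities, so these insertions are unambiguous, and for $\y\in S\setminus\supp(\fourier{f})$ the inserted gate is trivial. (This is where the implicit hypothesis $\supp(\fourier{f})\subseteq S$ is used.) To evaluate $U_{C'}$ on an input basis state $\ket{\x}$ I would track the state left to right: each $\cnot$ acts on the current basis label by an elementary linear map, and whenever an inserted $\rz(\fourier{f}(\y))$ is reached, the current label on wire $j$ is by definition exactly $\parity{\y}(\x)$, so that gate multiplies the state by $e^{2\pi i\fourier{f}(\y)\parity{\y}(\x)}$ and leaves the basis label fixed. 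Since the $\cnot$ subcircuit of $C'$ is exactly $C$, the final basis label is $A\x$ and the accumulated phase is $\sum_{\y\in S}\fourier{f}(\y)\parity{\y}(\x) = f(\x)$; hence $U_{C'} = \sum_{\x}e^{2\pi i f(\x)}\ket{A\x}\bra{\x}$, which is the sum-over-paths form $(f,A)$.

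The calculations are routine; the one point deserving care is the justification that an inserted $\rz$ gate "sees'' precisely the annotated parity — equivalently, that the annotation faithfully records the wire's basis label at that location — which is exactly the content of the annotated-circuit construction for \cref{prop:canon}. I would therefore present both directions as direct consequences of that construction rather than re-deriving the label-propagation argument, so the only real work is the bookkeeping of which phase is picked up at each inserted (or removed) $\rz$ gate.
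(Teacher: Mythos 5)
Your proof is correct and follows essentially the same route as the paper's: both parts reduce to the annotated-circuit construction behind \cref{prop:canon}, using that $\rz$ gates neither change annotations nor the overall linear transformation, so removal preserves all labels and $A$, while insertion accumulates exactly the phase $\sum_{\y}\fourier{f}(\y)\parity{\y}(\x)$. Your write-up is simply a more detailed version of the paper's (very terse) argument, including the sensible observation about the implicit hypothesis $\supp(\fourier{f})\subseteq S$ in the second statement.
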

\begin{proof}
For the former statement, by definition of the canonical sum-over-paths form the annotated version of $C$ necessarily contains $\parity{\y}(\x)$ as a label for every $\y\in\supp(\fourier{f})$. Since $\rz$ gates do not change labels or permute basis vectors, the circuit $C'$ contains all the labels of $C$ and implements the permutation $A$, hence $C'$ is a parity network $\supp(\fourier{f})$ pointed at $A$.

The proof of the latter statement is similar.
\end{proof}

\subsection{From \texorpdfstring{$\cnot$}--minimization to parity network synthesis}

\Cref{prop:skeleton-equiv} implies that the problem of finding a minimal pointed size parity network is equivalent to the problem of finding a $\cnot$-minimal circuit for a particular sum-over-paths form. However, it is not necessarily the case that a $\cnot$-minimal circuit having a particular sum-over-paths form is a $\cnot$-minimal circuit implementing a particular unitary matrix. Since for any integer-valued function $k:\F^n\rightarrow \Z$,
\[
  e^{2\pi i f(\x)} = e^{2\pi i (f(\x) + k(\x))},
\]
it may in general be possible to instead synthesize a \emph{different} sum-over-paths form giving the same unitary operator, but with lower $\cnot$ cost. For instance,
\[
  \frac{1}{2}(x_1\oplus x_2), \qquad \text{ and }\qquad \frac{1}{2}x_1 + \frac{1}{2}x_2
\]
differ by an integer-valued function, $k(x_1,x_2)=x_1x_2$, and hence implement the same phase rotation. The left expression, together with the identity basis state transformation, gives rise to a minimal circuit containing $2$ $\cnot$ gates, while the expression on the right requires no $\cnot$ gates to implement, at the expense of an extra phase gate. The two annotated circuits are shown below.
{\scriptsize\[{
  \Qcircuit @C=.8em @R=1.4em {
  	 \lstick{x_1} & \ctrl{1} & \qw & \qw & \qw & \qw & \qw & \ctrl{1} & \qw & \qw & \rstick{\!\!\!\! x_1} \\
  	 \lstick{x_2} & \targ & \ustick{\;\;\;\;\;\;x_1\oplus x_2}\qw & \qw & \qw & \qw & 
  	 \gate{\rz\left(\frac{1}{2}\right)} & \targ & \ustick{x_2}\qw & \qw & \rstick{\!\!\!\! x_2}
  }}
  \qquad\qquad\qquad
  {
  \Qcircuit @C=.8em @R=.6em {
  	 \lstick{x_1} & \gate{\rz\left(\frac{1}{2}\right)} & \qw & \rstick{\!\!\!\! x_1} \\
  	 \lstick{x_2} & \gate{\rz\left(\frac{1}{2}\right)} & \qw & \rstick{\!\!\!\! x_2} 
  }}
\]}

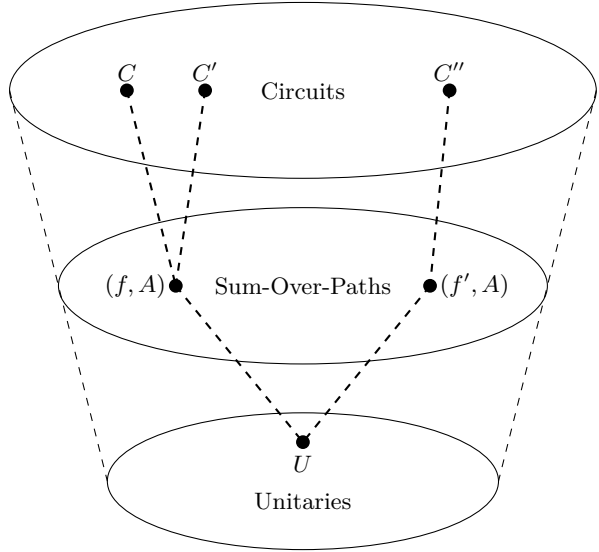
\begin{figure}
\small
\begin{tikzpicture}[>=latex,scale=1.3]
  \draw (0,0) ellipse (3 and .9) ;
  \draw (0,-2) ellipse (2.5 and .8) ;
  \draw (0,-4) ellipse (2 and .7) ;
  \draw (0,0) node{Circuits} ;
  \draw (0,-2) node{Sum-Over-Paths} ;
  \draw (0,-4.2) node{Unitaries} ;
  \fill [black] (-1.8,0) circle (2pt) ;
  \fill [black] (-1,0) circle (2pt) ;
  \fill [black] (1.5,0) circle (2pt) ;
  \fill [black] (-1.3,-2) circle (2pt) ;
  \fill [black] (1.3,-2) circle (2pt) ;
  \fill [black] (0,-3.6) circle (2pt) ;
  \draw[thick,dashed] (-1.8,0) -- (-1.3,-2) ;
  \draw[thick,dashed] (-1,0) -- (-1.3,-2) ;
  \draw[thick,dashed] (1.5,0) -- (1.3,-2) ;
  \draw[thick,dashed] (-1.3,-2) -- (0,-3.6) ;
  \draw[thick,dashed] (1.3,-2) -- (0,-3.6) ;
  \node [above] at (-1.8,0.01) {$C$} ;
  \node [above] at (-1,0.01) {$C'$} ;
  \node [above] at (1.5,0.01) {$C''$} ;
  \node [left] at (-1.31,-2) {$(f, A)$} ;
  \node [right] at (1.31,-2) {$(f', A)$} ;
  \node [below] at (0,-3.65) {$U$} ;
  \draw[dashed] (3,0) -- (2.5,-2) -- (2,-4) ;
  \draw[dashed] (-3,0) -- (-2.5,-2) -- (-2,-4) ;
\end{tikzpicture}
\caption{Relationship between circuit, sum-over-paths and unitary representations. As an example, the circuits $C, C'$ and $C''$ all correspond to the same unitary operator $U$, while only $C$ and $C'$ have the same sum-over-paths representation, $(f, A)$. In either case, the basis state transformation $A$ is the same.}
\label{fig:SOP}
\end{figure}

\Cref{fig:SOP} summarizes the relationship between circuits, sum-over-paths, and unitaries. As we are concerned with the question of minimizing $\cnot$ gates over circuits with equal \emph{unitary} representations, a natural question is how this relates to the question of minimizing $\cnot$ gates over circuits with equal \emph{sum-over-paths} representations. We now show that so long as no rotation gates have angles which are \emph{dyadic fractions} -- numbers of the form $\frac{a}{2^b}$ where $a$ and $b$ are integers -- the problems coincide.

We first formalize the intuition that two sum-over-paths forms correspond to equivalent unitaries if and only if their phases are related by an integer-valued function. In particular, we define the equivalence class of a phase function as
\[
  [f] = \{ f' : \F^n\rightarrow\R \mid f' = f + k\text{ where } k:\F^n\rightarrow\Z\},
\]
and we say $f'$ is equivalent to $f$, written $f'\sim f$, if $f' \in [f]$. From these definitions the following proposition follows straightforwardly.
\begin{proposition}\label{prop:equiv}
Given $f, f':\F^n\rightarrow \R$ and $A, A'\in\GL{n}$, the unitary matrices
\[
\sum_{\x\in\F^n}e^{2\pi i f(\x)}\ket{A\x}\bra{\x}, \qquad \sum_{\x\in\F^n}e^{2\pi i f'(\x)}\ket{A'\x}\bra{\x}
\]
are equal if and only if $A'=A$ and $f'\sim f$.
\end{proposition}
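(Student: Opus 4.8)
The plan is to compare the two operators entry by entry in the computational basis. Writing $U = \sum_{\x\in\F^n} e^{2\pi i f(\x)}\ket{A\x}\bra{\x}$, the column of $U$ indexed by $\x\in\F^n$ is precisely $e^{2\pi i f(\x)}\ket{A\x}$, i.e.\ a nonzero scalar multiple of a single computational basis vector $\ket{A\x}$; the same description holds for the second operator with $f',A'$ in place of $f,A$. Both matrices are thus of the form (permutation)$\times$(diagonal), which makes the correspondence between their data and the pair $(f,A)$ transparent.

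For the backward direction, suppose $A'=A$ and $f'\sim f$, say $f' = f + k$ with $k\colon\F^n\to\Z$. Then for every $\x$ we have $e^{2\pi i f'(\x)} = e^{2\pi i f(\x)}\,e^{2\pi i k(\x)} = e^{2\pi i f(\x)}$ since $e^{2\pi i m}=1$ for integer $m$, so the two defining sums agree term by term and the operators are equal.

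For the forward direction, assume the two operators are equal and equate their $\x$-th columns: $e^{2\pi i f(\x)}\ket{A\x} = e^{2\pi i f'(\x)}\ket{A'\x}$ for every $\x$. Because the two scalars are nonzero and a computational basis vector is never a nonzero scalar multiple of a \emph{different} basis vector, we must have $A\x = A'\x$ for all $\x\in\F^n$; since these maps agree at every point, $A=A'$. Cancelling $\ket{A\x}$ then leaves $e^{2\pi i f(\x)} = e^{2\pi i f'(\x)}$, i.e.\ $e^{2\pi i (f'(\x)-f(\x))}=1$, which forces $f'(\x)-f(\x)\in\Z$ for every $\x$. Hence $k := f'-f$ is an integer-valued function on $\F^n$ and $f'=f+k\sim f$.

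There is essentially no obstacle here; the statement follows directly from the normal form of the matrices. The only point deserving a word of care is the observation that distinct computational basis vectors are linearly independent, so equality of columns pins down $A\x=A'\x$ pointwise (and hence $A=A'$, as $\x$ ranges over all of $\F^n$) rather than merely identifying the two maps on a spanning set.
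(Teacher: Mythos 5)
Your proof is correct, and it is exactly the direct column-by-column verification that the paper has in mind when it states the proposition ``follows straightforwardly'' from the definitions (the paper gives no explicit proof). Both directions are handled cleanly, including the two points that need care: distinct computational basis vectors are not scalar multiples of one another (forcing $A\x=A'\x$ for all $\x$, hence $A=A'$), and $e^{2\pi i(f'(\x)-f(\x))}=1$ forces $f'-f$ to be integer-valued, which is precisely the definition of $f'\sim f$.
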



It is a known fact\footnote{As we use a slightly different definition of Fourier expansion, not every pseudo-Boolean function has a unique Fourier expansion. Our Fourier expansions are only unique up to constant terms, which correspond to global phase factors and are not directly synthesizable over $\{\cnot, \rz\}$.} \cite{od14} that every pseudo-Boolean function $f:\F^n\rightarrow \R$ has a unique Fourier expansion. However, to study the relationship between Fourier expansions of equivalent functions, it will be important to know their precise form in the case of integer-valued functions.

\begin{proposition}\label{prop:groupelems}
For any integer-valued function $k:\F^n\rightarrow \Z$, the Fourier coefficients of $k$ are dyadic fractions.
\end{proposition}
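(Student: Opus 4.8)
\emph{Proof proposal.} The plan is to route through the (unique) multilinear polynomial representation of $k$, where integrality of the coefficients is transparent, and then translate that representation into the parity basis, where each monomial contributes only dyadic weights. First I would recall that every $k:\F^n\to\R$ has a unique multilinear polynomial expansion $k(\x)=\sum_{S\subseteq[n]}c_S\prod_{i\in S}x_i$ (with $[n]=\{1,\dots,n\}$), and that its coefficients are recovered by Möbius inversion over the subset lattice: writing $\1_U\in\F^n$ for the indicator vector of $U\subseteq[n]$, one has $k(\1_U)=\sum_{S\subseteq U}c_S$, hence $c_S=\sum_{U\subseteq S}(-1)^{|S\setminus U|}k(\1_U)$. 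Since $k$ is integer-valued, every $k(\1_U)\in\Z$, and therefore every $c_S\in\Z$.

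The second step is to write each monomial $\prod_{i\in S}x_i$ as a dyadic-rational combination of parity functions. Setting $b_i=(-1)^{x_i}=1-2x_i$ and expanding $\prod_{i\in S}x_i=2^{-|S|}\prod_{i\in S}(1-b_i)$ by inclusion--exclusion gives $\prod_{i\in S}x_i=2^{-|S|}\sum_{T\subseteq S}(-1)^{|T|}(-1)^{\parity{\1_T}(\x)}$, using $\prod_{i\in T}b_i=(-1)^{\parity{\1_T}(\x)}$. Substituting $(-1)^{\parity{\z}(\x)}=1-2\parity{\z}(\x)$, noting $\sum_{T\subseteq S}(-1)^{|T|}=0$ for $S\neq\emptyset$, and using $\parity{\0}\equiv 0$ to drop the $T=\emptyset$ term, one obtains
\[
  \prod_{i\in S}x_i \;=\; \frac{1}{2^{|S|-1}}\sum_{\emptyset\neq T\subseteq S}(-1)^{|T|+1}\,\parity{\1_T}(\x),
\]
a combination of parities whose denominators divide $2^{|S|-1}$.

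Combining the two steps, $k(\x)=c_\emptyset+\sum_{\emptyset\neq S\subseteq[n]}c_S\prod_{i\in S}x_i$ is an integer constant $c_\emptyset=k(\0)$ plus a combination of the parity functions $\parity{\y}$ with coefficients lying in $2^{-(n-1)}\Z$. By uniqueness of the Fourier expansion (\cite{od14}, up to the constant term, as noted in the surrounding discussion), this representation must be the Fourier expansion of $k$; hence the constant term $k(\0)$ is an integer and every $\fourier{k}(\y)$ lies in $2^{-(n-1)}\Z$, i.e.\ is a dyadic fraction.

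I do not expect a genuine obstacle here; the only points needing care are bookkeeping. Specifically, one must track that $\parity{\0}$ is the zero function so that the would-be ``$\y=\0$'' contribution collapses into the constant term $c_\emptyset$, matching the paper's convention that Fourier expansions are unique only up to constants; and one should confirm the denominators are controlled, which they are since $|S|\le n$. As an alternative to the polynomial detour, one could instead invoke the $\{-1,1\}$-convention inversion formula from \cref{app:fourier}, $\fourier{k}^{\pm}(\y)=2^{-n}\sum_{\x}k(\x)(-1)^{\parity{\y}(\x)}\in 2^{-n}\Z$, and convert via $(-1)^{\parity{\y}}=1-2\parity{\y}$; I would keep this as a remark rather than the main argument.
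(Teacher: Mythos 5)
Your proposal is correct and takes essentially the same route as the paper: expand $k$ as a multilinear polynomial with integer coefficients, then rewrite each monomial over the parity basis with denominators dividing $2^{|S|-1}$ --- your displayed identity is exactly \cref{eq:incex}. The only differences are in how the ingredients are justified (Möbius inversion for integrality of the monomial coefficients instead of citing the literature, and the $(-1)^{x}=1-2x$ substitution with binomial cancellation instead of the inductive proof in \cref{app:incex}), and your explicit handling of the constant term and of $\parity{\0}$ being the zero function is, if anything, slightly more careful than the paper's.
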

\begin{proof}
Let $k:\F^n\rightarrow \Z$ be an integer-valued pseudo-Boolean function. It is known \cite{hr68} that $k$ has a unique representation as an $n$-ary multilinear polynomial over $\Z$ -- that is,
\[
  k(\x) = \sum_{\y\in\F^n} a_{\y} \x^{\y}
\]
where $\x^{\y} = x_1^{y_1}x_2^{y_2} \cdots x_n^{y_n}$ and $a_{\y}\in\Z$ for all $\y$.

Using the identity $x + y - (x \oplus y) = 2xy$ for $x, y \in \F$, we can derive an inclusion-exclusion formula for the monomial $\x^{\y}$ which we prove explicitly in \cref{app:incex}:
\begin{equation}\label{eq:incex}
  2^{|\y|-1}\x^{\y} = \sum_{\y'\subseteq \y} (-1)^{|\y'|-1}\parity{\y'}(\x).
\end{equation}
Note that binary vectors are viewed as subsets of $\{1,\dots, n\}$ for convenience. Since $a_{\y}\in\Z$ for all $\y$ and dyadic fractions are closed under addition, we observe that the Fourier coefficients of $k(\x)$ are dyadic fractions:
\[
  k(\x) = \sum_{\y\in\F^n} a_{\y} \x^{\y} = \sum_{\y\in\F^n}\left(\sum_{\y'\subseteq \y} (-1)^{|\y'|-1}\frac{a_{\y}}{2^{|\y|-1}}\right)\parity{\y'}(\x).
\]
\end{proof}

The proof of \cref{prop:groupelems} also suffices to prove a more general result, namely that any function from $\F^n$ to an Abelian group $G$ in which $2$ is a regular element has a unique Fourier expansion over $G$. This more general version also subsumes a similar result proven in \cite{am16}, namely that Fourier expansions are unique over any cyclic group of order co-prime to $2$; indeed, $2$ is regular in any such group.

We next use the above proposition to show that any pseudo-Boolean function with non-dyadic spectrum has a property of \emph{minimal support} over all equivalent functions. This is important as a parity network for $S$ is also a parity network for any subset of $S$.

\begin{proposition}\label{prop:fouriersupport}
Let $f:\F^n\rightarrow\R$ be a pseudo-Boolean function having a Fourier spectrum not containing any non-zero dyadic fractions. Then for any $f'\sim f$, 
\[
  \supp(\fourier{f})\subseteq\supp(\fourier{f'}).
\]
\end{proposition}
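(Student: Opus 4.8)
The plan is to deduce the statement from the additivity of Fourier coefficients together with \cref{prop:groupelems}. Since $f'\sim f$ means $f'\in[f]$, there is by definition an integer-valued $k:\F^n\to\Z$ with $f'=f+k$. Because the Fourier expansion of a pseudo-Boolean function is unique away from the constant coefficient $\fourier{f}(\0)$ (which multiplies the identically-zero parity $\parity{\0}$ and so plays no role -- it corresponds to a global phase not synthesizable over $\{\cnot,\rz\}$), the Fourier coefficients add: $\fourier{f'}(\y)=\fourier{f}(\y)+\fourier{k}(\y)$ for every $\y\neq\0$. By \cref{prop:groupelems}, each $\fourier{k}(\y)$ is a dyadic fraction, hence so is $-\fourier{k}(\y)$ since dyadic fractions are closed under negation.

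Next I would fix $\y\in\supp(\fourier{f})$ (necessarily $\y\neq\0$ under the above convention) and argue by contradiction. If $\y\notin\supp(\fourier{f'})$ then $\fourier{f'}(\y)=0$, whence $\fourier{f}(\y)=-\fourier{k}(\y)$ is a dyadic fraction; but $\fourier{f}(\y)\neq0$ because $\y\in\supp(\fourier{f})$, so $f$ has a non-zero dyadic fraction in its spectrum, contradicting the hypothesis. Therefore $\fourier{f'}(\y)\neq0$, i.e.\ $\y\in\supp(\fourier{f'})$, and since $\y$ was arbitrary we conclude $\supp(\fourier{f})\subseteq\supp(\fourier{f'})$.

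I do not expect any real obstacle here: every ingredient -- the description of $[f]$ underlying \cref{prop:equiv}, additivity of Fourier coefficients, and the dyadic form of $\fourier{k}$ from \cref{prop:groupelems} -- is already in hand. The one point requiring care is the status of the $\0$ coefficient, which I would handle by normalizing Fourier expansions so that $\fourier{f}(\0)=0$ (equivalently, by noting that constant terms are immaterial to both $\supp$ and the synthesis problem); with that in place the argument above is complete.
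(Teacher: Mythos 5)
Your proposal is correct and takes essentially the same route as the paper: write $f'=f+k$, use additivity of Fourier coefficients, and invoke \cref{prop:groupelems} so that the dyadic $\fourier{k}(\y)$ cannot cancel a non-dyadic $\fourier{f}(\y)$. Your contrapositive phrasing and the explicit normalization of the $\0$ coefficient are just a more careful spelling-out of the paper's one-line conclusion that $\fourier{f}(\y)+\fourier{k}(\y)\neq 0$ on the support of $\fourier{f}$.
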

\begin{proof}
Consider some pseudo-Boolean function $f'$ such that $f' \sim f$. By definition we have $f' = f + k$ for some function $k:\F^n \rightarrow \Z$.  Expanding $f(\x)$ and $k(\x)$ with their Fourier expansions we have
\[
  f'(\x) = f(\x) + k(\x) = \sum_{\y\in\F^n} (\fourier{f}(\y) + \fourier{k}(\y))\parity{\y}(\x).
\]

Now since for any $\y$, $\fourier{k}(\y) = \frac{a}{2^b}$, $\fourier{f}(\y) + \fourier{k}(\y) \neq 0$. Thus $\supp(\fourier{f})\subseteq \supp(\fourier{f'})$ as required.
\end{proof}

We can now prove that the problem of synthesizing a minimal (pointed) parity network is at least as hard as general $\cnot$-minimization. As a corollary, synthesizing a minimal parity network solves the $\cnot$-minimization problem whenever the rotation angles, and hence the Fourier coefficients, are not dyadic fractions.

\begin{theorem}\label{thm:synthesishardness}
Given $A\in\GL{n}$, the problem of finding a minimal parity network for $S\subseteq \F^n$ pointed at $A$ reduces (in polynomial time) to the problem of finding a $\cnot$-minimal circuit equivalent to an $n$-qubit circuit $C$ over $\{\cnot, \rz\}$.
\end{theorem}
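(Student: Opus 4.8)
The plan is to turn an instance $(S,A)$ of the (pointed) parity-network problem into an instance of $\cnot$-minimization. The key is to build, in polynomial time, a circuit $C$ over $\{\cnot,\rz\}$ whose canonical sum-over-paths form $(f,A)$ satisfies $\supp(\fourier{f})=S$ and has \emph{no nonzero dyadic} Fourier coefficients; then, by the propositions already established, a $\cnot$-minimal circuit equivalent to $C$, stripped of its phase gates, will be a minimal parity network for $S$ pointed at $A$. The reduction is: on input $(S,A)$, output this $C$; query the oracle for a $\cnot$-minimal circuit $C^\ast$ equivalent to $C$; and return the circuit obtained from $C^\ast$ by deleting all $\rz$ gates.

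First I would exhibit a naive parity network. Starting from the identity annotation, process the elements of $S$ one at a time: for each $\y$, pick an index $i\in\y$, use a chain of $\cnot$s to accumulate $\parity{\y}(\x)$ onto wire $i$, then uncompute with the reverse chain, returning the circuit to the identity transformation; finally append any $\cnot$ circuit realizing $A$ (e.g.\ via Gaussian elimination or \cite{pmh08}). This uses $O(n(\card{S}+n))$ gates and is a parity network for $S$ pointed at $A$. Into it I insert, for each $\y\in S$, a single phase gate $\rz(\tfrac13)$ at an occurrence of the annotation $\parity{\y}(\x)$; call the result $C$. By the canonical construction of \cref{prop:canon}, $C$ has sum-over-paths form $(f,A)$ with $\fourier{f}(\y)=\tfrac13$ for $\y\in S$ and $\fourier{f}(\y)=0$ otherwise, so $\supp(\fourier{f})=S$ and the spectrum of $f$ contains no nonzero dyadic fraction. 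The whole construction is polynomial time.

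Next I would show that $\min\{\,\#\cnot(D): D\text{ equivalent to }C\,\}$ equals the minimum size of a parity network for $S$ pointed at $A$. For ``$\le$'', take a minimal such parity network $N$ and apply the second half of \cref{prop:skeleton-equiv}, inserting $\rz(\fourier{f}(\y))=\rz(\tfrac13)$ at an annotation of each $\y\in S$: the result has sum-over-paths form $(f,A)$, hence by \cref{prop:equiv} is equivalent to $C$, with the same $\cnot$ count as $N$. For ``$\ge$'', let $D$ be any circuit equivalent to $C$; by \cref{prop:equiv} its sum-over-paths form is $(f',A)$ with $f'\sim f$, and since $\fourier{f}$ has no nonzero dyadic entries, \cref{prop:fouriersupport} gives $S=\supp(\fourier{f})\subseteq\supp(\fourier{f'})$; by the first half of \cref{prop:skeleton-equiv}, deleting the phase gates of $D$ yields a parity network for $\supp(\fourier{f'})$, hence also for $S$ (a parity network for a set is one for any subset), pointed at $A$, with $\#\cnot$ unchanged. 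Combining the two bounds, the oracle's output $C^\ast$ has $\cnot$ count equal to this common minimum, so deleting its phase gates yields a parity network for $S$ pointed at $A$ of minimal size, completing the reduction.

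The routine parts are the naive construction of the initial parity network and the bookkeeping with \cref{prop:skeleton-equiv}. The step that genuinely needs care is the ``$\ge$'' direction, and the single indispensable design choice is that the inserted rotation angles are non-dyadic: this is exactly what lets \cref{prop:fouriersupport} guarantee that no parity in $S$ can be annihilated by the integer-valued difference $f'-f$ (the obstruction illustrated by $\tfrac12(x_1\oplus x_2)$ versus $\tfrac12 x_1+\tfrac12 x_2$). Were the angles dyadic, a $\cnot$-cheaper equivalent circuit could fail to compute all of $S$, and the reduction would break; this is also precisely why the accompanying corollary equates the two problems only when the rotation angles are non-dyadic.
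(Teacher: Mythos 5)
Your proposal is correct and takes essentially the same approach as the paper: the same phase function $f(\x)=\sum_{\y\in S}\tfrac{1}{3}\parity{\y}(\x)$ realized as a polynomial-time constructible circuit with sum-over-paths form $(f,A)$, followed by the same chain of \cref{prop:equiv}, \cref{prop:fouriersupport} and \cref{prop:skeleton-equiv} applied to the oracle's $\cnot$-minimal equivalent circuit. The only difference is one of detail, not of route: you spell out the naive construction of $C$ and the ``$\le$'' direction of the minimality argument, which the paper compresses into its citation of \cite{amm14} and the parenthetical ``(necessarily minimal)''.
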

\begin{proof}
Given $A\in\GL{n}$ and $S\subseteq\F^n$, define $f:\F^n\rightarrow \R$ as
\[
  f(\x) = \sum_{\y\in S} \frac{1}{3}\parity{\y}(\x).
\]
It is known \cite{amm14} that a circuit $C$ over $\{\cnot, \rz\}$ implementing the sum-over-paths form $(f, A)$ can be constructed in polynomial time.

Now let $C'$ be a $\cnot$-minimal circuit equivalent to $C$. By \cref{prop:equiv} the sum-over-paths form of $C'$ must be $(f', A)$ for some $f'\sim f$. However, by \cref{prop:fouriersupport}, $S=\supp(\fourier{f})\subseteq\supp(\fourier{f'})$, so by definition, the circuit obtained from $C'$ by removing all $\rz$ gates is a (necessarily minimal) parity network for $S$ pointed at $A$.
\end{proof}

\begin{remark}
In cases when the Fourier coefficients contain dyadic fractions, it may in general be possible to further minimize $\cnot$-count by optimizing over all equivalent phase functions. This question was studied in \cite{am16} from the perspective of \emph{$T$-count} optimization -- in that case, the number of $T$ gates depends only on the size of the support, $|\supp(\fourier{f})|$. By contrast, the size of the support of a Fourier spectrum does not necessarily correspond to the size of a minimal parity network -- e.g., $\frac{1}{2}(x_1\oplus x_2)$ has smaller support than $\frac{1}{2}x_1 + \frac{1}{2}x_2$ but a larger minimal parity network as shown earlier -- which appears to make the problem of minimizing $\cnot$s size over all equivalent functions more difficult.
\end{remark}

\section{Complexity of parity network minimization}\label{sec:complexity}

We now turn to the question of the complexity of finding minimal parity networks. We study two cases in particular where the problem can be shown to be NP-complete -- the fixed-target case, and with encoded inputs (i.e. with ancillae). At the end of the section we discuss the case of synthesizing a minimal parity network with arbitrary targets and no ancillae.

Note that we focus on the problem of synthesizing minimal parity networks rather than \emph{pointed} parity networks -- that is, synthesizing a minimal parity network up to some arbitrary overall linear transformation. However, in the cases we consider equivalent reductions for pointed parity networks are also possible.

\subsection{Fixed-target minimal parity network}

We call the problem of synthesizing a minimal parity network in which every $\cnot$ gate has the same target the \emph{fixed-target minimal parity network problem}. Formally, we define the associated decision problem MPNP$_{\text{FT}}$ below:

\smallskip
\begin{tabular}{lll}
	\hspace{1em} & {\sc Problem:} & Fixed-target minimal parity network (MPNP$_{\text{FT}}$) \\
	\hspace{1em} & {\sc Instance:} & A set of strings $S\subseteq\F^n$, and a positive integer $k$. \\
	\hspace{1em} & {\sc Question:} & Does there exist an $n$-qubit circuit $C$ over $\cnot$ gates of \\
		& & length at most $k$ such that $C$ is a parity network for $S$?
\end{tabular}
\smallskip

\begin{remark}
In general not every set of strings $S$ admits an ancilla-free fixed-target parity network, as the value of the target bit necessarily appears in every parity calculation of a fixed-target $\cnot$ circuit. It follows that an (ancilla-free) parity network for $S$ is synthesizeable if and only if there exists an index $i$ such that for every $\y\in S$, $y_i = 1$. However, a fixed-target parity network may always be synthesized by adding a single ancillary bit initialized to the state $\ket{0}$. In particular, given a set $S\subseteq\F^n$ and $A\in\GL{n}$, we may construct
\[
  S' = \{ (\y \concat 1) \mid \y \in S\},
\]
where $\y \concat 1$ denotes the length $n+1$ string obtained by concatenating $\y$ with $1$. It may then be observed that a fixed-target parity network for $S'$ is always synthesizeable, and in particular forms a parity network for $S$ when the $(n+1)$th bit is initialized to $\ket{0}$.
\end{remark}

To show that the fixed-target minimal parity network problem is NP-complete, we introduce the \emph{Hamming salesman problem} (HTSP) \cite{ernvall1985np}. Recall that the $n$-dimensional hypercube is the graph with vertices $\x\in\F^n$ and edges between $\x, \y\in\F^n$ if $\x$ and $\y$ differ in one coordinate (i.e. have Hamming distance $1$).

\smallskip
\begin{tabular}{lll}
	\hspace{1em} & {\sc Problem:} & Hamming salesman (HTSP) \\
	\hspace{1em} & {\sc Instance:} & A set of strings $S\subseteq\F^n$, and a positive integer $k$. \\
	\hspace{1em} & {\sc Question:} & Does there exist a path in the $n$-dimensional hypercube of \\
		& & length at most $k$ starting at $\0$ and going through each \\
		& & vertex $\y\in S$?
\end{tabular}
\smallskip

An equivalent (from a complexity theoretic viewpoint) version of the Hamming salesman problem exists where a cycle rather than path is found. Intuitively, the Hamming salesman problem is to find a sequence of at most $k$ bit-flips iterating through every string in some set $S$ starting from the initial string $00\dots 0$. In the case when $S=\F^n$ the minimal number of bit flips is known to be $2^n$, corresponding to one bit flip per string; this is the well known \emph{Gray code}, a total ordering on $\F^n$ where each subsequent string differs by exactly one bit. We will come back to this connection later in \cref{sec:opt} when designing a synthesis algorithm.

Ernvall, Katajainen and Penttonen \cite{ernvall1985np} show that the Hamming salesman problem is in fact NP-complete, hence we can use a reduction from HTSP to prove NP-completeness of MPNP$_{\text{FT}}$.

\begin{theorem}\label{thm:fixedtarget}
MPNP$_{\text{FT}}$ is NP-complete.
\end{theorem}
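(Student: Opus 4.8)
The plan is to establish both halves of NP-completeness: membership in NP, which is routine, and NP-hardness, by a polynomial-time reduction from HTSP.

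For membership in NP, a fixed-target $\cnot$ circuit $C$ serves as the certificate: given $C$ one computes its annotation in polynomial time --- tracking the parity label of each wire exactly as in the construction following \cref{prop:canon} --- and checks that every $\y\in S$ occurs among the labels. The only subtlety is ensuring the certificate has polynomial size, and for this it suffices to note that $k$ may be assumed polynomially bounded: any set admitting a fixed-target parity network admits one of length at most $O(n\lvert S\rvert)$ (move the target wire's label from one required parity to the next, each hop costing at most $n$ further $\cnot$ gates), so instances with $k$ exceeding this bound are decided trivially.

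For NP-hardness I would use the following structural description of fixed-target $\cnot$ circuits. If every $\cnot$ in a circuit $C$ has target wire $t$, then the labels on all wires other than $t$ stay frozen at $\parity{e_i}$, while the label on wire $t$ starts at $\parity{e_t}$ and, after a $\cnot$ with control $c$, is updated by toggling coordinate $c$. Hence the successive labels on wire $t$ trace a walk through the subcube of vertices whose $t$-th coordinate equals $1$ --- a subcube isomorphic to the full hypercube on the remaining coordinates, with start vertex corresponding to $\0$ --- and each $\cnot$ advances that walk by one edge. Consequently a fixed-target parity network for a set $S'$ with target $t$ of length $\ell$ is precisely a length-$\ell$ hypercube walk from $\0$ visiting every $\y\in S'$ with $y_t=1$ (weight-$\le 1$ elements of $S'$ are covered for free, as frozen labels or the start vertex), which is an HTSP instance.

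Given this, the reduction maps an HTSP instance $(S,k)$ with $S\subseteq\F^n$ to the MPNP$_{\text{FT}}$ instance $(S',k)$ on $n+1$ qubits, where $S'=\{\,\y\concat 1 \mid \y\in S\,\}$; this is clearly polynomial-time computable. Taking target $t=n+1$, every element of $S'$ has last coordinate $1$, and dropping that coordinate identifies the induced walk problem with the original HTSP instance --- a walk from $\0$ visiting $S$ --- so any HTSP path of length $\le k$ yields a fixed-target parity network of the same length, and vice versa. The point I expect to require the most care is that MPNP$_{\text{FT}}$ does not prescribe which wire is the common target, so one must rule out that some target $t\le n$ yields a spuriously short parity network for $S'$. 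The resolution is that a target $t\le n$ can support any parity network for $S'$ only when every $\y\in S$ with $\y\neq\0$ satisfies $y_t=1$, and in that case the assignment $\y\concat 1\mapsto (\y\concat 1)\setminus\{t\}$ is just the coordinate relabeling interchanging $t$ and $n+1$, so the walk problem induced by target $t$ is isomorphic to that induced by target $n+1$, hence to the original HTSP instance. Thus the minimum length of a fixed-target parity network for $S'$, over all admissible choices of target, equals the minimum HTSP walk length for $S$, so $(S',k)$ is a yes-instance of MPNP$_{\text{FT}}$ exactly when $(S,k)$ is a yes-instance of HTSP. Together with NP-membership, this proves NP-completeness.
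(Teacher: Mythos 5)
Your proof is correct and takes essentially the same route as the paper: the identical reduction from HTSP via $S'=\{\y\concat 1\mid \y\in S\}$ with $k'=k$, the same correspondence between the controls of fixed-target $\cnot$ gates and single bit-flips tracing a hypercube walk on the target wire, and the same polynomial-time annotation check for NP-membership. Your handling of the unspecified target wire (and of the weight-one element $\0\concat 1$) is in fact a bit more careful than the paper's brief ``swap bits $i$ and $n+1$'' remark, but it is the same argument in substance.
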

\begin{proof}
Clearly MPNP$_{\text{FT}}$ is in NP, as the state of each bit as a parity of the input values at each state in a $\cnot$ circuit is polynomial-time computable \cite{amm14}, and hence a parity network can be efficiently verified. Since HTSP is NP-complete \cite{ernvall1985np} it then suffices to show NP-hardness by reducing the Hamming salesman problem to the fixed-target minimal parity network problem.

Given an instance $(S\subseteq\F^n, k)$ of HTSP, we construct an instance $(S'\subseteq\F^{n+1}, k')$ of MPNP$_{\text{FT}}$ with size polynomial in $|S|\cdot n$ as follows:
\[
  S' = \{ (\x \concat 1) \mid \x \in S\}, \qquad k' = k.
\]

Suppose there exists a fixed-target parity network $C$ for $S'$ with length at most $k$. Without loss of generality we may assume that the fixed target is the $(n+1)$th bit, as if some $i\neq n+1$ is the target index, then for all $\y\in S'$, $y_i = 1=y_{n+1}$ and hence swapping bits $i$ and $n+1$ yields a parity network for $S'$. We can then construct a length $k$ hypercube path through each vertex $\y\in S$ with starting point $\0$ by mapping $C$ to a sequence of bit flips on each $\cnot$'s control bit. Indeed, by noting that
\[
  \cnot\ket{x_i}\ket{x_{n+1} \oplus \parity{\y}(\x)} = \ket{x_i}\ket{x_{n+1} \oplus \parity{\y\oplus\{i\}}(\x)},
\]
where $\{i\}$ is taken as the bitstring $\z$ with $z_i = 1, z_{j\neq i} = 0$, each $\cnot$ gate in $C$ with control $i$ has the affect of flipping the $i$th bit of $\y$. By the definition of a parity network, for every $\y\in S$, the parity
\[
  x_{n+1} \oplus \parity{\y}(\x)
\]
appears as an annotation in the circuit, in particular on the $(n+1)$th bit which had initial state $x_{n+1} \oplus \parity{\0}(\x)$,
hence the sequence of bit flips passes through each vertex $\y\in S$ starting from $\0$.

Likewise, if there exists a length $k$ tour through each $\y\in S$, given by a sequence of bit flips, the circuit defined by mapping each bit flip on $i$ to a $\cnot$ with control $i$ and target $n+1$ is a length $k$ parity network for $S'$ and $A'$.
\end{proof}

As the minimum $k$ for which a parity network exists is at most $(n-1)\cdot |S|$, the optimization version of MPNP$_{\text{FT}}$ is also in NP, and hence is NP-complete.

\begin{corollary}
The problem of finding a minimal fixed-target parity network is NP-complete.
\end{corollary}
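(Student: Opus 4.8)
The plan is to deduce this from \cref{thm:fixedtarget} by showing that the optimization problem---given $S\subseteq\F^n$, output the minimum length of a fixed-target parity network for $S$, or report that none exists---is polynomial-time Turing-equivalent to the decision problem MPNP$_{\text{FT}}$. One direction is immediate: any algorithm for the optimization problem also decides MPNP$_{\text{FT}}$, since it suffices to compare the returned optimum against the input $k$; hence the optimization problem is NP-hard.

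For the reverse direction I would first bound the optimum polynomially. As noted in the remark preceding \cref{thm:fixedtarget}, an $n$-qubit fixed-target parity network for $S$ exists if and only if there is an index $i$ with $y_i=1$ for every $\y\in S$, and this condition is checkable in time polynomial in $|S|\cdot n$. When it holds, I would exhibit a (generally non-minimal) witness of size at most $(n-1)\cdot|S|$: process the elements of $S$ in any order while keeping target wire $i$ in the state $\parity{\z}(\x)$ of the most recently produced parity $\z$, and move to the next desired parity $\y$ by applying $\cnot$ gates with control on the coordinates where $\z$ and $\y$ differ. Since $z_i=y_i=1$ throughout, these coordinates lie in $\{1,\dots,n\}\setminus\{i\}$, so at most $n-1$ gates are used per element of $S$, giving the bound stated just before the corollary.

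Given $k^{*}\le (n-1)\cdot|S|$ on the optimum, the optimization problem reduces to MPNP$_{\text{FT}}$ by binary search over $k\in\{0,1,\dots,(n-1)|S|\}$ using $O(\log(n|S|))$ oracle queries; an explicit optimal circuit (rather than merely the optimal value) can then be recovered by the usual self-reduction, committing $\cnot$ gates one at a time and re-querying the oracle. Combining the two directions shows that finding a minimum fixed-target parity network is polynomial-time Turing-equivalent to an NP-complete problem---equivalently, its natural decision version is NP-complete---which is the sense of the statement.

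The only place that requires genuine argument is the $(n-1)\cdot|S|$ upper bound on the optimum, i.e. the explicit feasibility witness together with the characterization of which $S$ are feasible at all; the rest is the standard ``binary search plus self-reduction'' packaging that lifts NP-completeness of a decision problem with a polynomially bounded objective to its optimization counterpart. I do not expect any real obstacle here---this corollary is essentially bookkeeping once \cref{thm:fixedtarget} is in hand.
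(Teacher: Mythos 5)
Your proposal is correct and follows essentially the same route as the paper: \cref{thm:fixedtarget} plus the observation that the optimum is bounded by $(n-1)\cdot|S|$ (consecutive parities agree on the all-ones coordinate $i$, so each is reachable in at most $n-1$ control flips). The paper compresses this into a one-line remark that the optimization version is therefore also in NP, whereas you make the standard binary-search/self-reduction packaging explicit; the substance is identical.
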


It may be observed that the proof of \cref{thm:fixedtarget} can be modified to show that the problem of finding a minimal \emph{pointed} parity network with fixed $\cnot$ targets is also NP-complete. In particular, taking $A$ to be the identity transformation gives a reduction from the cycle version of HTSP.

\subsection{Minimal parity network with encoded inputs}

As $\cnot$-phase circuits form a relatively small \cite{acr17}, classically simulable group, one of the main applications of their optimization is to optimize sub-circuits of circuits over more powerful gate sets. The $T$-par optimization algorithm \cite{amm14} previously took this approach, re-synthesizing $\{\cnot, T\}$ subcircuits of Clifford+$T$ circuits with optimal $T$-depth. As Selinger \cite{s13} showed, adding ancillae increases the amount of $T$ gate parallelization possible when performing this re-synthesis.

In the case of $\cnot$ optimization, the situation is similar in that ancillae can have a significant effect on the number of $\cnot$ gates required to implement a parity network, particularly if the ancillae are initialized in linear combinations of the primary inputs. For instance, the boxed $\cnot$-phase sub-circuit below on the left performs a phase rotation of $\frac{1}{8}(x_2\oplus x_3)$ -- by noting that the ancilla begins the sub-circuit already in the state $x_2\oplus x_3$ we can remove both $\cnot$ gates, as shown by the equivalent circuit on the right.
{\scriptsize
\[
	\Qcircuit @C=.8em @R=.3em @!R {
  		\lstick{x_1} & \qw & \qw & \targ & \gate{H} & \qw & \qw & \qw & \qw & \rstick{\!\!\!\!x_1'} \\
  		\lstick{x_2} & \qw & \ctrl{2} & \qw & \qw & \ctrl{1} & \qw & \ctrl{1} & \qw & \rstick{\!\!\!\!x_2} \\
  		\lstick{x_3} & \ctrl{1} & \qw & \qw & \qw & \targ & \gate{T} & \targ & \qw & \rstick{\!\!\!\!x_3} \\
  		\lstick{0} & \targ & \targ & \ctrl{-3} & \qw & \qw & \qw & \qw & \qw \gategroup{1}{6}{4}{8}{1.5em}{..}  & \rstick{\!\!\!\!x_2\oplus x_3}
	}
	  \qquad\qquad\qquad\qquad\qquad\qquad
	  \Qcircuit @C=.8em @R=.3em @!R {
  		\lstick{x_1} & \qw & \qw & \targ & \gate{H} & \qw & \qw & \rstick{\!\!\!\!x_1'} \\
  		\lstick{x_2} & \qw & \ctrl{2} & \qw & \qw & \qw & \qw & \rstick{\!\!\!\!x_2} \\
  		\lstick{x_3} & \ctrl{1} & \qw & \qw & \qw & \qw & \qw & \rstick{\!\!\!\!x_3} \\
  		\lstick{0} & \targ & \targ & \ctrl{-3} & \qw & \gate{T} & \qw \gategroup{1}{6}{4}{6}{.8em}{..}  & \rstick{\!\!\!\!x_2\oplus x_3}
	}
\]
}

We now consider the problem of synthesizing minimal parity networks when some of the inputs are linear combinations of others. Formally, given a linear transformation $E\in\F^{m\times n}$ where $m>n$, a string $\w\in\F^m$ we say is an \emph{encoding} of $\x\in\F^n$ if $E\x = \w$. The \emph{minimal parity network with encoded inputs problem} (MPNP$_{\text{E}}$) is then to find a parity network for a given set $S\subseteq\F^n$ as a function of the primary inputs $\x\in\F^n$, but with the initial state $\ket{E\x}$ rather than $\ket{\x}$.

\smallskip
\begin{tabular}{lll}
	\hspace{1em} & {\sc Problem:} & Minimal parity network with encoded inputs (MPNP$_{\text{E}}$) \\
	\hspace{1em} & {\sc Instance:} & A set of strings $S\subseteq\F^n$, a linear transformation $E\in\F^{m\times n}$, \\
		& & and a positive integer $k$. \\
	\hspace{1em} & {\sc Question:} & Does there exist an $m$-qubit circuit $C$ over $\cnot$ gates of \\
		& & length at most $k$ such that $C$ is a parity network for some \\
		& & set $S'\subseteq\F^m$ where for any $\y\in S$ there exists $\w\in S'$ such \\
		& & that $E^T\w = \y$?
\end{tabular}
\smallskip

It can be observed that a parity network for some set $S'$ as above is equivalent to a parity network for $S$ starting from the initial state $\ket{E\x}$ for any $\x\in\F^n$. In particular, for any $\w\in S'$ and $\x\in\F^n$,
\[
	\parity{\w}(E\x) = \w^TE\x = \parity{E^T\w}(\x) = \parity{\y}(\x).
\]

MPNP$_{\text{E}}$ is again NP-complete, which we prove by a reduction from the well known NP-complete \emph{Maximum-likelihood Decoding Problem} (MLDP). While the focus of this paper is on $\{\cnot, \rz\}$ circuits, the proof may be modified to establish the more general result that synthesizing a $\cnot$ circuit implementing $A\in\F^{m\times n}$ is NP-complete.

\smallskip
\begin{tabular}{lll}
	\hspace{1em} & {\sc Problem:} & Maximum-likelihood decoding (MLDP) \\ 
	\hspace{1em} & {\sc Instance:} & A linear transformation $H\in\F^{m\times n}$, a vector $\y\in\F^n$, and a \\
		& & positive integer $k$. \\
	\hspace{1em} & {\sc Question:} & Does there exist a vector $\w\in\F^m$ of weight at most $k$ such \\
		& & that $H\w = \y$?
\end{tabular}
\smallskip

In the case when $H$ is the parity check matrix of a code $C$ and $\y$ is the syndrome of some vector $\z$, finding the minimum such $\w$ gives the minimum weight vector in the coset of $\z + C$, corresponding to a minimum distance decoding of $\z$. Berlekamp, McEliece, and van Tilborg \cite{bmt78} proved that the Maximum-likelihood decoding problem is NP-complete, and so we may reduce it to the minimal parity network with encoded inputs problem to show NP-completeness.

\begin{theorem}
MPNP$_{\text{E}}$ is NP-complete.
\end{theorem}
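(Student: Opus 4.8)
The plan is to establish membership in NP essentially as in \cref{thm:fixedtarget} --- the parity carried by each qubit along a $\cnot$ circuit is polynomial-time computable, and whenever every $\y\in S$ lies in the row space of $E$ there is a parity network for $S$ of length polynomial in $n,m,\card{S}$ (compute each parity of $S$ directly, then uncompute it), so instances with large $k$ are decided in polynomial time while for small $k$ a length-$k$ circuit is a succinct certificate --- and then to prove NP-hardness by reducing MLDP to MPNP$_{\text{E}}$.

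Given an MLDP instance $(H,\y,k)$, I would output the MPNP$_{\text{E}}$ instance $(S,E,k')$ in which $S=\{\y\}$, $E$ is the matrix whose rows are the columns of $H$ (so $E=H^{T}$), and $k'=k-1$; one may assume $k\ge 2$ and $\y\neq\0$, since the remaining cases of MLDP are solvable in polynomial time and can be sent to fixed trivial instances. The point is the following dictionary. Starting from $\ket{E\x}$, the $i$th qubit carries the parity $\parity{E_i}(\x)$ for $E_i$ the $i$th row of $E$, that is, the $i$th column of $H$; and after some $\cnot$ gates each qubit carries $\parity{\w'}(E\x)=\parity{E^{T}\w'}(\x)=\parity{H\w'}(\x)$, where $\w'\in\F^m$ is the indicator of the set of initial qubit values that have been XORed onto it. Hence a $\cnot$ circuit is a parity network for $\{\y\}$ precisely when some qubit comes to carry $\parity{H\w'}(\x)$ for a $\w'$ with $H\w'=\y$.

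The forward implication is immediate: from $\w$ with $H\w=\y$ and $\card{\w}=d$, XORing the $d-1$ qubits of $\supp(\w)\setminus\{i_0\}$ onto a fixed $i_0\in\supp(\w)$ puts $\parity{\y}(\x)$ on qubit $i_0$ after $d-1$ $\cnot$ gates, so using a minimum-weight $\w$ turns an MLDP ``yes'' with bound $k$ into an MPNP$_{\text{E}}$ ``yes'' with bound $k'=k-1$.

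The converse is where the work lies, and the main obstacle is that the vector $\w'$ read off a length-$\ell$ parity network directly can have weight as large as $2^{\ell}$ (a single $\cnot$ can XOR an already-heavy parity onto a qubit), so a sharper count is needed to extract a solution of weight at most $\ell+1$. My plan is to pass to the computation DAG of the circuit: its nodes are the $m$ initial qubit values (the columns of $H$, of in-degree $0$) together with the $\ell$ $\cnot$-gate outputs (each of in-degree exactly $2$, since control and target are distinct qubits), with the two in-edges of each gate output running from the values it combines. Restrict to the ancestors of a node carrying $\parity{\y}(\x)$; if this sub-DAG has $I$ gate-outputs and $L$ sources then it has exactly $2I$ edges (each gate output contributing its two in-edges), while being connected --- every node has a directed path to the target --- it has at least $I+L-1$ edges; hence $L\le I+1\le\ell+1$. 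As $\y$ is the XOR of the subset of these at most $L$ source columns of $H$ that reach the target along an odd number of directed paths, this yields $\w$ with $H\w=\y$ and $\card{\w}\le L\le\ell+1$. Combining the two directions, the minimum parity-network length in the constructed instance is exactly the minimum MLDP weight minus one, so the reduction is correct. Finally, running the same argument with $S$ taken to be the set of rows of $A$ proves the stronger claim flagged above, that synthesizing a minimal $\cnot$ circuit for a prescribed $A\in\F^{m\times n}$ is NP-complete as well.
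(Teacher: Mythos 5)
Your proposal is correct and takes essentially the same route as the paper: the identical reduction from MLDP with $S=\{\y\}$, $E=H^{T}$, $k'=k-1$, the same forward direction via computing $\parity{\w}$ with $|\w|-1$ $\cnot$ gates, and the same converse bound that any parity appearing in a length-$k'$ network has weight at most $k'+1$. Your ancestor-DAG edge count ($2I\ge I+L-1$, hence $L\le I+1\le k'+1$) is a more carefully justified version of the weight bound the paper asserts tersely from the triangle inequality, and your NP-membership discussion (handling large $k$ via a compute--uncompute network) likewise just spells out what the paper treats as immediate.
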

\begin{proof}
As noted in the proof of \cref{thm:fixedtarget}, MPNP$_{\text{E}}$ is clearly in NP since a parity network can be verified in polynomial time. To establish NP-hardness we give a reduction from MLDP.

Given an instance $(H, \y, k)$ of MLDP we construct an instance $(S, k')$ of MPNP$_{\text{E}}$ as follows:
\[
  S = \{\y\}, \qquad E = H^T, \qquad k' = k-1.
\]

Suppose there exists a vector $\w\in\F^m$ of weight at most $k$ such that $H\w = \y$. Then we know $S'=\{\w\}$ satisfies the requirement that for any $\y\in S$ there exists $\w\in S'$ such that $E^T\w = H\w = \y$. Moreover, the parity computation $\parity{\w}$ can be computed with $|\w| - 1\leq k'$ $\cnot$ gates, hence there exists a parity network of length at most $k'$ for $S'$

On the other hand, suppose there exists a length $\leq k'$ parity network for some set $S'$ where there exists $\w\in S'$ such that $E^T\w = H\w = \y$. By noting that
\[
  \cnot\ket{\parity{\y}(\x)}\ket{\parity{\z}(\x)} = \ket{\parity{\y}(\x)}\ket{\parity{\y\oplus\z}(\x)},
\]
$|\y\oplus\z|\leq |\y| + |\z|$. As each bit starts in some state $x_i = \parity{\{x_i\}}(\x)$ we see that the size of the parity in any bit at any point in the parity network is at most $k'+1\leq k$, and so $|\w|\leq k$ as required.
\end{proof}

\begin{corollary}
The problem of finding a minimal parity network with encoded inputs is NP-hard.
\end{corollary}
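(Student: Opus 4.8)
The plan is to observe that the decision problem MPNP$_{\text{E}}$, which the preceding theorem shows is NP-complete, reduces in polynomial time to the optimization problem of finding a minimal parity network with encoded inputs. This is the standard Turing reduction: suppose we have an algorithm $\mathcal{A}$ that, given a set $S \subseteq \F^n$ and an encoding $E \in \F^{m \times n}$, returns a minimal-length parity network for $S$ with inputs encoded by $E$ (or reports that no such network exists). Given an instance $(S, E, k)$ of MPNP$_{\text{E}}$, I would run $\mathcal{A}$ on $(S, E)$ and answer ``yes'' precisely when $\mathcal{A}$ returns a network of length at most $k$.

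First I would check that $\mathcal{A}$'s output can be used efficiently: the length of the returned circuit is read off directly and compared with $k$ in polynomial time. Second I would dispose of the case where no parity network exists at all: as observed just before the theorem, a parity network for a set $S'$ with the required property exists if and only if every $\y \in S$ lies in the image of $E^T$, which is decidable in polynomial time by Gaussian elimination over $\F$. When this fails, the MPNP$_{\text{E}}$ instance is a ``no'' instance for every $k$; otherwise some parity network exists --- for instance the naive one that computes each required parity from scratch --- so $\mathcal{A}$ returns a well-defined minimal length $\ell$, and $(S, E, k)$ is a ``yes'' instance exactly when $\ell \le k$. Since MPNP$_{\text{E}}$ is NP-complete, this shows the optimization problem is NP-hard.

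I do not expect any genuine obstacle here; the corollary is the textbook consequence that NP-completeness of a decision problem implies NP-hardness of the associated optimization problem. The only point requiring a little care is making precise what ``finding a minimal parity network'' produces --- the minimal length, or an actual minimal circuit --- but the argument is insensitive to this choice, since the reduction uses only the minimal length.
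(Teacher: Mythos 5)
Your proposal is correct and matches the paper's (implicit) argument: the corollary is stated as an immediate consequence of the NP-completeness of the decision problem MPNP$_{\text{E}}$, which is exactly the standard decision-to-optimization reduction you spell out. Your extra care about instances where no parity network exists (checking membership of each $\y$ in the image of $E^T$ by Gaussian elimination over $\F$) is a reasonable elaboration of a point the paper leaves unstated, not a departure from its approach.
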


As in the fixed-target case, the problem of finding a minimal pointed parity network with encoded inputs is also NP-complete, by virtue of the fact that a minimal identity parity network for a singleton set $S=\{\y\}$ necessarily has the form $C \append C'$ where both $C$ and $(C')^{-1}$ are both parity network for $S$. Recall that the inverse circuit $(C')^{-1}$ has the same length as $C'$.

\subsection{Discussion} While the case of parity network synthesis with encoded inputs corresponds to the practical cases of synthesis with ancillae and sub-circuit re-synthesis, it relies on the hardness of finding a minimal sum of linearly dependent vectors, or minimum distance decoding. It would appear that the problem becomes easier when using unencoded inputs, as each vector and hence parity may be expressed uniquely over the inputs. We leave the complexity of finding a minimal parity network without ancillae as an open problem, but conjecture that the identity version is at least as hard as synthesizing a minimal fixed-target identity parity network. In particular, it appears that the two problems coincide whenever one bit appears in every parity.

\section{A heuristic synthesis algorithm}\label{sec:opt}

In this section we present an efficient, heuristic algorithm for synthesizing small parity networks. The algorithm is inspired by Gray codes, which were noted in \cref{sec:complexity} to iterate through all $2^n$ elements of $\F^n$ minimally with one bit flip per string. The situation is different for synthesizing parity networks as the bits which can be flipped depend on the state of all $n$ bits, so our method works by trying to find subsets of $S$ which can be efficiently iterated with a Gray code on a fixed target. In the limit where $S=\F^n$, the algorithm gives a minimal size parity network for $S$. Again we focus just on the problem of synthesizing a parity network up to some arbitrary overall linear transformation.

\begin{algorithm}
\caption{Algorithm for synthesizing a parity network}
\label{alg:graysynth}
\begin{algorithmic}[1]
\Function{gray-synth}{$S\subseteq \F^n$}
	\State New empty circuit $C$
	\State New empty stack $Q$
	\State $Q$.push($S$, $\{1,\dots, n\}$, $\epsilon$)
	\While{$Q$ non-empty}
		\State ($S$, $I$, $i$) $\gets$ $Q$.pop
		\If{$S = \emptyset$ \textbf{or} $I=\emptyset$}
			\Return
		\ElsIf{$i\in\N$}
			\While{$\exists j\neq i \in \{1,\dots, n\}$ s.t. $\y_j = 1$ for all $\y\in S$}
				\State $C \gets C \append \cnot_{j, i}$
				\ForAll{($S'$, $I'$, $i'$)$\in Q\cup$($S$, $I$, $i$)}
					\ForAll{$\y\in S'$}
						\State $\y\gets E_{i,j}\y$
					\EndFor
				\EndFor
			\EndWhile
		\EndIf
		\State $j \gets \argmax_{j\in I} \max_{x\in\F} |\{\y\in S \mid y_j = x\}|$
		\State $S_{0} \gets \{\y \in S \mid y_j = 0\}$
		\State $S_{1} \gets \{\y \in S \mid y_j = 1\}$
		\If{$i \in \{\epsilon\}$}
			\State $Q$.push($S_{1}$, $I\setminus\{j\}$, $j$)
		\Else
			\State $Q$.push($S_{1}$, $I\setminus\{j\}$, $i$)
		\EndIf
		\State $Q$.push($S_{0}$, $I\setminus\{j\}$, $i$)
	\EndWhile
	\State \Return $C$
\EndFunction
\end{algorithmic}
\end{algorithm}

The algorithm {\sc gray-synth}, is presented in pseudo-code in \cref{alg:graysynth}. Note that $\cnot_{i, j}$ denotes a $\cnot$ gate with control $i$ and target $j$, and $E_{i, j}$ denotes the elementary $\F$-matrix adding row $i$ to row $j$. Given a set of binary strings $S$, the algorithm synthesizes a parity network for $S$ by repeatedly choosing an index $i$ to expand and then effectively recurring on the co-factors $S_0$ and $S_1$, consisting of the strings $\y\in S$ with $y_i = 0$ or $1$, respectively. As a subset $S$ is recursively expanded, $\cnot$ gates are applied so that a designated \emph{target} bit contains the (partial) parity $\parity{\y}(\x)$ where $y_i = 1$ if and only if $y'_i = 1$ for all $\y'\in S$ -- if $S$ is a singleton $\{\y'\}$, then $\y = \y'$, hence the target bit contains the value $\parity{\y'}(\x)$ as desired. Notably, rather than \emph{uncomputing} this sequence of $\cnot$ gates when a subset $S$ is finished being synthesized, the algorithm maintains the invariant that the remaining parities to be computed are expressed over the current state of the bits. This allows the algorithm to avoid the ``backtracking'' inherent in uncomputing-based methods.

More precisely, the invariant of \cref{alg:graysynth} described above is expressed in the following lemma.
\begin{lemma}\label{lem:inv}
Let $C$ be a $\cnot$ circuit and $S\subseteq \F^n$. For any positive integer $i$ we let $C_{\leq i}$ denote the first $i$ gates of $C$, $c_i$ and $t_i$ be the control and target of the $i$th $\cnot$ gate in $C$. If we define
\begin{align*} 
  A_0 &= I &\y_0 &= \y \\
  A_i &= E_{c_i, t_i}A_{i-1} &\y_i &= E_{t_i, c_i}\y_{i-1}
\end{align*}
for every $\y\in S$, then it follows that for any $\x\in\F^n$, $U_{C\leq i}\ket{\x} = \ket{A_i\x}$ and
\begin{align*}
  \parity{\y_i}(A_i\x) &= \parity{\y}(\x).
\end{align*}
\end{lemma}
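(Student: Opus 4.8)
The plan is to prove both claims simultaneously by induction on $i$, since the statement about $U_{C_{\leq i}}$ and the statement about parities reinforce each other. The base case $i = 0$ is immediate: $C_{\leq 0}$ is the empty circuit so $U_{C_{\leq 0}} = I = A_0$, and $\parity{\y_0}(A_0\x) = \parity{\y}(\x)$ trivially since both $A_0$ and the transformation on $\y$ are the identity.

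For the inductive step, assume the claim holds for $i - 1$. First I would establish the unitary part: the $i$th gate is $\cnot_{c_i, t_i}$, which maps $\ket{\z}$ to $\ket{E_{c_i, t_i}\z}$ for any $\z \in \F^n$ (adding the $c_i$-th coordinate into the $t_i$-th). Hence $U_{C_{\leq i}}\ket{\x} = U_{\cnot_{c_i,t_i}} U_{C_{\leq i-1}}\ket{\x} = U_{\cnot_{c_i,t_i}}\ket{A_{i-1}\x} = \ket{E_{c_i,t_i}A_{i-1}\x} = \ket{A_i \x}$, using the recursive definition $A_i = E_{c_i,t_i}A_{i-1}$.

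The parity part is the crux, and the key identity is the adjointness relation $\parity{\w}(M\v) = \w^T M \v = \parity{M^T\w}(\v)$ over $\F$, already used elsewhere in the paper. I would compute
\[
  \parity{\y_i}(A_i\x) = \parity{\y_i}(E_{c_i,t_i}A_{i-1}\x) = \parity{E_{c_i,t_i}^T \y_i}(A_{i-1}\x),
\]
and then observe that $E_{c_i,t_i}^T = E_{t_i,c_i}$ (the transpose of the elementary matrix adding row $c_i$ to row $t_i$ is the one adding row $t_i$ to row $c_i$), so $E_{c_i,t_i}^T \y_i = E_{t_i,c_i}\y_i = E_{t_i,c_i}E_{t_i,c_i}\y_{i-1} = \y_{i-1}$, since $E_{t_i,c_i}$ is an involution over $\F$ (adding a row to another twice cancels out, as $c_i \neq t_i$). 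Therefore $\parity{\y_i}(A_i\x) = \parity{\y_{i-1}}(A_{i-1}\x)$, and the inductive hypothesis gives $\parity{\y_{i-1}}(A_{i-1}\x) = \parity{\y}(\x)$, closing the induction.

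The main obstacle, such as it is, is keeping the two elementary-matrix conventions straight: the circuit acts on states by $E_{c_i, t_i}$ (control into target), while the parities must be pushed back by $E_{t_i, c_i}$, and the reconciliation hinges precisely on the transpose relation $E_{c_i,t_i}^T = E_{t_i,c_i}$ together with involutivity of elementary matrices over $\F$. Everything else is bookkeeping. I would also remark that the relation $\parity{\y_i}(A_i\x) = \parity{\y}(\x)$ is exactly what justifies \cref{alg:graysynth}'s invariant that the yet-to-be-computed parities, re-expressed over the current bit states via the updates $\y \gets E_{i,j}\y$, still correspond to the original target parities of the inputs.
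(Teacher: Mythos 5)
Your proof is correct and follows essentially the same route as the paper: induction on the gate index, using the adjointness identity $\parity{\y}(A\x)=\parity{A^T\y}(\x)$ together with $E_{c_i,t_i}^T=E_{t_i,c_i}$ and the involutivity $E_{t_i,c_i}E_{t_i,c_i}=I$ over $\F$. The only cosmetic difference is that the paper transposes the full matrix $A_i$ onto the parity vector before cancelling, whereas you peel off only the last elementary factor and invoke the inductive hypothesis directly; the content is identical.
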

\begin{proof}
The fact that $U_{C\leq i}\ket{\x} = \ket{A_i\x}$ follows simply from the fact that $\cnot_{i, j}\ket{\x} = \ket{E_{i, j}\x}$. For the latter fact, clearly $\parity{\y_i}(A_i\x) = \parity{\y}(\x)$ by definition. Moreover, recall that
\[
  \parity{\y}(A\x) = \y^TA\x = \parity{A^T\y}(\x)
\]
and hence by induction,
\begin{align*}
  \parity{\y_i}(A_i\x)
  	&= \parity{(A_i)^T\y_i}(\x) \\
  	&= \parity{(A_{i-1})^TE_{t_i, c_i}E_{t_i, c_i}\y_{i-1}}(\x) \\
	&= \parity{(A_{i-1})^T\y_{i-1}}(\x) \\
  	&= \parity{\y}(\x).
\end{align*}
\end{proof}
It is clear to see that each $\y_i$ for $\y\in S$ in \cref{lem:inv} is the value of $\y$ after $i$ iterations. To see then that the output is in fact a parity network for $S$, it suffices to observe that whenever $S= \{\y_i\}$ and $I=\emptyset$, $|\y_i| = 1$ and thus by \cref{lem:inv}, some bit is in the state $\ket{\parity{\y_i}(A_i\x)} = \ket{\parity{\y}(\x)}$ after the $i$th $\cnot$ gate. While the fact that $|\y_i| = 1$ is assured by lines 9-16 in this case, the non-zero elements of the target strings are actually zero-ed out earlier, as the algorithm expands each coordinate. In particular, the first time the ``1'' branch is taken when expanding a set, corresponding to the first $1$ seen over the indices previously examined, the target bit $i$ is set -- taking further ``1'' branches result in the row $j$ being flipped to $0$ with a single $\cnot$. In this way the algorithm makes use of the redundancy in Fourier spectrum $S$.

In practice, a parity network implementing some particular basis state transformation is typically needed. We take the approach of synthesizing pointed parity network by first synthesizing a regular parity network, then implementing the remaining linear transformation -- i.e. $AA_i^{-1}$ where $A_i$ is the linear transformation implemented by the network. In our implementation we use the Patel-Markov-Hayes algorithm \cite{pmh08} which gives asymptotically optimal $\cnot$ count.

While the correctness of \cref{alg:graysynth} is independent of the choice of index $j$ to expand in line 18, in practice it has a large impact on the size of the resulting parity network. We chose $j$ so as to maximize the size of the largest subset, $S_0$ or $S_1$, i.e. $j = \argmax_{j\in I} \max_{x\in\F} |\{\y\in S \mid y_j= x\}|$. The intuition behind this choice is that as a subset $S$ of $\F^n$ with $m$ bits fixed approaches $|S| = 2^{n-m}$, the minimal parity network for $S$ approaches one $\cnot$ per string, corresponding to the Gray code in the limit. We also ran experiments with other methods of choosing $j$; we found that $j = \argmax_{j\in I} \max_{x\in\F} |\{\y\in S \mid y_j = x\}|$ gave the best results on average.

\subsection{Examples}

\begin{example}
To illustrate \cref{alg:graysynth}, we demonstrate the use of {\sc gray-synth} to synthesize a circuit over $\{\cnot, \rz\}$ implementing the diagonal operator $U\ket{\x} = e^{2\pi i f(\x)}\ket{\x}$ given by
\[
  f(\x) = \frac{1}{8}\left[(x_2\!\oplus\! x_3) + x_1 + (x_1\!\oplus\! x_4) + (x_1\!\oplus\! x_2 \!\oplus\! x_3) + (x_1\!\oplus\! x_2 \!\oplus\! x_4) 
  	+ (x_1 \!\oplus\! x_2)\right].
\]

\hspace*{-8em}
 \begin{minipage}{0.8\textwidth}
\[
\begin{tikzpicture}
  \matrix (m)[
    matrix of math nodes,
    left delimiter  = (,
    right delimiter = ),
  ] {
0&1&1&1&1&1 \\
1&0&0&1&1&1 \\
1&0&0&1&0&0 \\
0&0&1&0&1&0 \\
  } ;


\end{tikzpicture}
\]
\end{minipage}
\begin{minipage}{0.6\textwidth}\footnotesize
  \Qcircuit @C=.8em @R=1.2em {
  	\lstick{x_1}& \qw & \rstick{\!\!\!\!x_1} \\
  	\lstick{x_2}& \qw & \rstick{\!\!\!\!x_2} \\
  	\lstick{x_3}& \qw & \rstick{\!\!\!\!x_3} \\
  	\lstick{x_4}& \qw & \rstick{\!\!\!\!x_4}
  }
\end{minipage}

Starting with the initial set $S$, written as the columns of the matrix on the left, we choose a bit maximizing the number of $0$'s or $1$'s in $S$. As $j=1$ in this case, we construct the cofactors $S_0$ and $S_1$ on the values in the first row, and recurse on $S_0$.


\hspace*{-8em}
 \begin{minipage}{0.8\textwidth}
\[
\begin{tikzpicture}
  \matrix (m)[
    matrix of math nodes,
    left delimiter  = (,
    right delimiter = ),
  ] {
\node[node style bb]{0};&\node[node style bb]{1};&\node[node style bb]{1};&\node[node style bb]{1};&\node[node style bb]{1};&\node[node style bb]{1}; \\
1&0&0&1&1&1 \\
1&0&0&1&0&0 \\
0&0&1&0&1&0 \\
  } ;
  
  \draw (m-1-1.north west) rectangle (m-4-1.south east);

\end{tikzpicture}
\]
\end{minipage}
\begin{minipage}{0.6\textwidth}\footnotesize
  \Qcircuit @C=.8em @R=1.2em {
  	\lstick{x_1}& \qw & \rstick{\!\!\!\!x_1} \\
  	\lstick{x_2}& \qw & \rstick{\!\!\!\!x_2} \\
  	\lstick{x_3}& \qw & \rstick{\!\!\!\!x_3} \\
  	\lstick{x_4}& \qw & \rstick{\!\!\!\!x_4}
  }
\end{minipage}


The greyed row above indicates that the first row has been partitioned, and the box indicates the current subset being synthesized. The algorithm next selects row $2$ and immediately descends into the $1$-cofactor, since $S_0=\emptyset$. Again, the algorithm selects row $3$, and since both rows $2$ and $3$ have the value $1$, a $\cnot$ is applied with bit $2$ as the target and $3$ as the control. The remaining vectors are updated by multiplying with $E_{2,3}$ -- the modified entries are shown in red.

\hspace*{-8em}
 \begin{minipage}{0.8\textwidth}
\[
\begin{tikzpicture}
  \matrix (m)[
    matrix of math nodes,
    left delimiter  = (,
    right delimiter = ),
  ] {
\node[node style bb]{0};&\node[node style bb]{1};&\node[node style bb]{1};&\node[node style bb]{1};&\node[node style bb]{1};&\node[node style bb]{1}; \\
\node[node style bb]{1};&0&0&1&1&1 \\
1&0&0&1&0&0 \\
0&0&1&0&1&0 \\
  } ;
  
  \draw (m-1-1.north west) rectangle (m-4-1.south east);

\end{tikzpicture}
\]
\end{minipage}
\begin{minipage}{0.6\textwidth}\footnotesize
  \Qcircuit @C=.8em @R=1.2em {
  	\lstick{x_1}& \qw & \rstick{\!\!\!\!x_1} \\
  	\lstick{x_2}& \qw & \rstick{\!\!\!\!x_2} \\
  	\lstick{x_3}& \qw & \rstick{\!\!\!\!x_3} \\
  	\lstick{x_4}& \qw & \rstick{\!\!\!\!x_4}
  }
\end{minipage}


\hspace*{-8em}
 \begin{minipage}{0.8\textwidth}
\[
\begin{tikzpicture}
  \matrix (m)[
    matrix of math nodes,
    left delimiter  = (,
    right delimiter = ),
  ] {
\node[node style bb]{0};&\node[node style bb]{1};&\node[node style bb]{1};&\node[node style bb]{1};&\node[node style bb]{1};&\node[node style bb]{1}; \\
\node[node style bb]{1};&0&0&1&1&1 \\
\node[node style bb]{1};&0&0&1&0&0 \\
0&0&1&0&1&0 \\
  } ;
  
  \draw (m-1-1.north west) rectangle (m-4-1.south east);

\end{tikzpicture}
\raisebox{2.7em}{$\rightarrow$} 
\begin{tikzpicture}
  \matrix (m)[
    matrix of math nodes,
    left delimiter  = (,
    right delimiter = ),
  ] {
\node[node style bb]{0};&\node[node style bb]{1};&\node[node style bb]{1};&\node[node style bb]{1};&\node[node style bb]{1};&\node[node style bb]{1}; \\
\node[node style bb]{1};&0&0&1&1&1 \\
\node[node style ba]{0};&0&0&\node[node style aa]{0};&\node[node style aa]{1};&\node[node style aa]{1}; \\
0&0&1&0&1&0 \\
  } ;
  
  \draw (m-1-1.north west) rectangle (m-4-1.south east);

\end{tikzpicture}
\]
\end{minipage}
\begin{minipage}{0.6\textwidth}\footnotesize
  \Qcircuit @C=.8em @R=1.2em {
  	\lstick{x_1}& \qw & \qw & \rstick{\!\!\!\!x_1} \\
  	\lstick{x_2}& \targ & \qw & \rstick{\!\!\!\!x_2\oplus x_3} \\
  	\lstick{x_3}& \ctrl{-1} & \qw & \rstick{\!\!\!\!x_3} \\
  	\lstick{x_4}& \qw & \qw & \rstick{\!\!\!\!x_4}
  }
\end{minipage}


As the final row has the value $0$, we're finished with this column and may continue with the remaining subsets. The single $1$ in row $2$ indicates that the second qubit currently holds the value of the corresponding parity $x_2\oplus x_3$, as seen in the circuit on the right.

\hspace*{-8em}
 \begin{minipage}{0.8\textwidth}
\[
\begin{tikzpicture}
  \matrix (m)[
    matrix of math nodes,
    left delimiter  = (,
    right delimiter = ),
  ] {
\node[node style bb]{0};&\node[node style bb]{1};&\node[node style bb]{1};&\node[node style bb]{1};&\node[node style bb]{1};&\node[node style bb]{1}; \\
\node[node style bb]{1};&0&0&1&1&1 \\
\node[node style bb]{0};&0&0&0&1&1 \\
\node[node style bb]{0};&0&1&0&1&0 \\
  } ;
  
  \draw (m-1-1.north west) rectangle (m-4-1.south east);

\end{tikzpicture}
\]
\end{minipage}
\begin{minipage}{0.6\textwidth}\footnotesize
  \Qcircuit @C=.8em @R=1.2em {
  	\lstick{x_1}& \qw & \qw & \rstick{\!\!\!\!x_1} \\
  	\lstick{x_2}& \targ & \qw & \rstick{\!\!\!\!x_2\oplus x_3} \\
  	\lstick{x_3}& \ctrl{-1} & \qw & \rstick{\!\!\!\!x_3} \\
  	\lstick{x_4}& \qw & \qw & \rstick{\!\!\!\!x_4}
  }
\end{minipage}


Again the algorithm chooses row $2$ maximizing the number of entries which are the same for the remaining columns, and recurses on the $0$-cofactor as shown below.

\hspace*{-8em}
 \begin{minipage}{0.8\textwidth}
\[
\begin{tikzpicture}
  \matrix (m)[
    matrix of math nodes,
    left delimiter  = (,
    right delimiter = ),
  ] {
\node[node style bb]{0};&\node[node style bb]{1};&\node[node style bb]{1};&\node[node style bb]{1};&\node[node style bb]{1};&\node[node style bb]{1}; \\
\node[node style bb]{1};&\node[node style bb]{0};&\node[node style bb]{0};&\node[node style bb]{1};&\node[node style bb]{1};&\node[node style bb]{1}; \\
\node[node style bb]{0};&0&0&0&1&1 \\
\node[node style bb]{0};&0&1&0&1&0 \\
  } ;
  
  \draw (m-1-2.north west) rectangle (m-4-3.south east);

\end{tikzpicture}
\]
\end{minipage}
\begin{minipage}{0.6\textwidth}\footnotesize
  \Qcircuit @C=.8em @R=1.2em {
  	\lstick{x_1}& \qw & \qw & \rstick{\!\!\!\!x_1} \\
  	\lstick{x_2}& \targ & \qw & \rstick{\!\!\!\!x_2\oplus x_3} \\
  	\lstick{x_3}& \ctrl{-1} & \qw & \rstick{\!\!\!\!x_3} \\
  	\lstick{x_4}& \qw & \qw & \rstick{\!\!\!\!x_4}
  }
\end{minipage}


\hspace*{-8em}
 \begin{minipage}{0.8\textwidth}
\[
\begin{tikzpicture}
  \matrix (m)[
    matrix of math nodes,
    left delimiter  = (,
    right delimiter = ),
  ] {
\node[node style bb]{0};&\node[node style bb]{1};&\node[node style bb]{1};&\node[node style bb]{1};&\node[node style bb]{1};&\node[node style bb]{1}; \\
\node[node style bb]{1};&\node[node style bb]{0};&\node[node style bb]{0};&\node[node style bb]{1};&\node[node style bb]{1};&\node[node style bb]{1}; \\
\node[node style bb]{0};&\node[node style bb]{0};&\node[node style bb]{0};&0&1&1 \\
\node[node style bb]{0};&0&1&0&1&0 \\
  } ;
  
  \draw (m-1-2.north west) rectangle (m-4-3.south east);

\end{tikzpicture}
\]
\end{minipage}
\begin{minipage}{0.6\textwidth}\footnotesize
  \Qcircuit @C=.8em @R=1.2em {
  	\lstick{x_1}& \qw & \qw & \rstick{\!\!\!\!x_1} \\
  	\lstick{x_2}& \targ & \qw & \rstick{\!\!\!\!x_2\oplus x_3} \\
  	\lstick{x_3}& \ctrl{-1} & \qw & \rstick{\!\!\!\!x_3} \\
  	\lstick{x_4}& \qw & \qw & \rstick{\!\!\!\!x_4}
  }
\end{minipage}


In expanding the last row, we first examine the $0$-cofactor and find nothing to do, then the $1$-cofactor, at which point we need to apply a $\cnot$ with target bit $1$ and control $4$.

\hspace*{-8em}
 \begin{minipage}{0.8\textwidth}
\[
\begin{tikzpicture}
  \matrix (m)[
    matrix of math nodes,
    left delimiter  = (,
    right delimiter = ),
  ] {
\node[node style bb]{0};&\node[node style bb]{1};&\node[node style bb]{1};&\node[node style bb]{1};&\node[node style bb]{1};&\node[node style bb]{1}; \\
\node[node style bb]{1};&\node[node style bb]{0};&\node[node style bb]{0};&\node[node style bb]{1};&\node[node style bb]{1};&\node[node style bb]{1}; \\
\node[node style bb]{0};&\node[node style bb]{0};&\node[node style bb]{0};&0&1&1 \\
\node[node style bb]{0};&\node[node style bb]{0};&\node[node style bb]{1};&0&1&0 \\
  } ;

  \draw (m-1-2.north west) rectangle (m-4-2.south east);

\end{tikzpicture}
\]
\end{minipage}
\begin{minipage}{0.6\textwidth}\footnotesize
  \Qcircuit @C=.8em @R=1.2em {
  	\lstick{x_1}& \qw & \qw & \rstick{\!\!\!\!x_1} \\
  	\lstick{x_2}& \targ & \qw & \rstick{\!\!\!\!x_2\oplus x_3} \\
  	\lstick{x_3}& \ctrl{-1} & \qw & \rstick{\!\!\!\!x_3} \\
  	\lstick{x_4}& \qw & \qw & \rstick{\!\!\!\!x_4}
  }
\end{minipage}


\hspace*{-8em}
 \begin{minipage}{0.8\textwidth}
\[
\begin{tikzpicture}
  \matrix (m)[
    matrix of math nodes,
    left delimiter  = (,
    right delimiter = ),
  ] {
\node[node style bb]{0};&\node[node style bb]{1};&\node[node style bb]{1};&\node[node style bb]{1};&\node[node style bb]{1};&\node[node style bb]{1}; \\
\node[node style bb]{1};&\node[node style bb]{0};&\node[node style bb]{0};&\node[node style bb]{1};&\node[node style bb]{1};&\node[node style bb]{1}; \\
\node[node style bb]{0};&\node[node style bb]{0};&\node[node style bb]{0};&0&1&1 \\
\node[node style bb]{0};&\node[node style bb]{0};&\node[node style bb]{1};&0&1&0 \\
  } ;

  \draw (m-1-3.north west) rectangle (m-4-3.south east);

\end{tikzpicture}
\raisebox{2.7em}{$\rightarrow$} 
\begin{tikzpicture}
  \matrix (m)[
    matrix of math nodes,
    left delimiter  = (,
    right delimiter = ),
  ] {
\node[node style bb]{0};&\node[node style bb]{1};&\node[node style bb]{1};&\node[node style bb]{1};&\node[node style bb]{1};&\node[node style bb]{1}; \\
\node[node style bb]{1};&\node[node style bb]{0};&\node[node style bb]{0};&\node[node style bb]{1};&\node[node style bb]{1};&\node[node style bb]{1}; \\
\node[node style bb]{0};&\node[node style bb]{0};&\node[node style bb]{0};&0&1&1 \\
\node[node style bb]{0};&\node[node style bb]{0};&\node[node style ba]{0};&\node[node style aa]{1};&\node[node style aa]{0};&\node[node style aa]{1}; \\
  } ;

  \draw (m-1-3.north west) rectangle (m-4-3.south east);

\end{tikzpicture}
\]
\end{minipage}
\begin{minipage}{0.6\textwidth}\footnotesize
  \Qcircuit @C=.8em @R=1em {
  	\lstick{x_1}& \qw & \targ & \qw & \rstick{\!\!\!\!x_1\oplus x_4} \\
  	\lstick{x_2}& \targ & \qw & \qw & \rstick{\!\!\!\!x_2\oplus x_3} \\
  	\lstick{x_3}& \ctrl{-1} & \qw & \qw & \rstick{\!\!\!\!x_3} \\
  	\lstick{x_4}& \qw & \ctrl{-3} & \qw & \rstick{\!\!\!\!x_4}
  }
\end{minipage}


Now backtracking and entering the $1$-cofactor for the remaining columns, we find we need to apply a $\cnot$ between bits $2$ and $1$ to zero out the row.

\hspace*{-8em}
 \begin{minipage}{0.8\textwidth}
\[
\begin{tikzpicture}
  \matrix (m)[
    matrix of math nodes,
    left delimiter  = (,
    right delimiter = ),
  ] {
\node[node style bb]{0};&\node[node style bb]{1};&\node[node style bb]{1};&\node[node style bb]{1};&\node[node style bb]{1};&\node[node style bb]{1}; \\
\node[node style bb]{1};&\node[node style bb]{0};&\node[node style bb]{0};&\node[node style bb]{1};&\node[node style bb]{1};&\node[node style bb]{1}; \\
\node[node style bb]{0};&\node[node style bb]{0};&\node[node style bb]{0};&0&1&1 \\
\node[node style bb]{0};&\node[node style bb]{0};&\node[node style bb]{0};&1&0&1 \\
  } ;
  
  \draw (m-1-4.north west) rectangle (m-4-6.south east);

\end{tikzpicture}
\raisebox{2.7em}{$\rightarrow$} 
\begin{tikzpicture}
  \matrix (m)[
    matrix of math nodes,
    left delimiter  = (,
    right delimiter = ),
  ] {
\node[node style bb]{0};&\node[node style bb]{1};&\node[node style bb]{1};&\node[node style bb]{1};&\node[node style bb]{1};&\node[node style bb]{1}; \\
\node[node style bb]{1};&\node[node style bb]{0};&\node[node style bb]{0};&\node[node style ba]{0};&\node[node style ba]{0};&\node[node style ba]{0}; \\
\node[node style bb]{0};&\node[node style bb]{0};&\node[node style bb]{0};&0&1&1 \\
\node[node style bb]{0};&\node[node style bb]{0};&\node[node style bb]{0};&1&0&1 \\
  } ;
  
  \draw (m-1-4.north west) rectangle (m-4-6.south east);

\end{tikzpicture}
\]
\end{minipage}
\begin{minipage}{0.6\textwidth}\footnotesize
  \Qcircuit @C=.8em @R=1em {
  	\lstick{x_1}& \qw & \targ & \targ & \qw & \rstick{\!\!\!\!x_1\oplus x_2\oplus x_4} \\
  	\lstick{x_2}& \targ & \qw & \ctrl{-1} & \qw & \rstick{\!\!\!\!x_2\oplus x_3} \\
  	\lstick{x_3}& \ctrl{-1} & \qw & \qw & \qw & \rstick{\!\!\!\!x_3} \\
  	\lstick{x_4}& \qw & \ctrl{-3} & \qw & \qw & \rstick{\!\!\!\!x_4}
  }
\end{minipage}


Continuing on we recurse on the $0$-cofactor of row $3$ and apply a $\cnot$ with target bit $1$, control $4$, before backtracking to the $1$-cofactor.

\hspace*{-8em}
 \begin{minipage}{0.8\textwidth}
\[
\begin{tikzpicture}
  \matrix (m)[
    matrix of math nodes,
    left delimiter  = (,
    right delimiter = ),
  ] {
\node[node style bb]{0};&\node[node style bb]{1};&\node[node style bb]{1};&\node[node style bb]{1};&\node[node style bb]{1};&\node[node style bb]{1}; \\
\node[node style bb]{1};&\node[node style bb]{0};&\node[node style bb]{0};&\node[node style bb]{0};&\node[node style bb]{0};&\node[node style bb]{0}; \\
\node[node style bb]{0};&\node[node style bb]{0};&\node[node style bb]{0};&\node[node style bb]{0};&\node[node style bb]{1};&\node[node style bb]{1}; \\
\node[node style bb]{0};&\node[node style bb]{0};&\node[node style bb]{0};&1&0&1 \\
  } ;
  
  \draw (m-1-4.north west) rectangle (m-4-4.south east);

\end{tikzpicture}
\]
\end{minipage}
\begin{minipage}{0.6\textwidth}\footnotesize
  \Qcircuit @C=.8em @R=1em {
  	\lstick{x_1}& \qw & \targ & \targ & \qw & \rstick{\!\!\!\!x_1\oplus x_2\oplus x_4} \\
  	\lstick{x_2}& \targ & \qw & \ctrl{-1} & \qw & \rstick{\!\!\!\!x_2\oplus x_3} \\
  	\lstick{x_3}& \ctrl{-1} & \qw & \qw & \qw & \rstick{\!\!\!\!x_3} \\
  	\lstick{x_4}& \qw & \ctrl{-3} & \qw & \qw & \rstick{\!\!\!\!x_4}
  }
\end{minipage}


\hspace*{-8em}
 \begin{minipage}{0.8\textwidth}
\[
\begin{tikzpicture}
  \matrix (m)[
    matrix of math nodes,
    left delimiter  = (,
    right delimiter = ),
  ] {
\node[node style bb]{0};&\node[node style bb]{1};&\node[node style bb]{1};&\node[node style bb]{1};&\node[node style bb]{1};&\node[node style bb]{1}; \\
\node[node style bb]{1};&\node[node style bb]{0};&\node[node style bb]{0};&\node[node style bb]{0};&\node[node style bb]{0};&\node[node style bb]{0}; \\
\node[node style bb]{0};&\node[node style bb]{0};&\node[node style bb]{0};&\node[node style bb]{0};&\node[node style bb]{1};&\node[node style bb]{1}; \\
\node[node style bb]{0};&\node[node style bb]{0};&\node[node style bb]{0};&\node[node style bb]{1};&0&1 \\
  } ;
  
  \draw (m-1-4.north west) rectangle (m-4-4.south east);

\end{tikzpicture}
\raisebox{2.7em}{$\rightarrow$} 
\begin{tikzpicture}
  \matrix (m)[
    matrix of math nodes,
    left delimiter  = (,
    right delimiter = ),
  ] {
\node[node style bb]{0};&\node[node style bb]{1};&\node[node style bb]{1};&\node[node style bb]{1};&\node[node style bb]{1};&\node[node style bb]{1}; \\
\node[node style bb]{1};&\node[node style bb]{0};&\node[node style bb]{0};&\node[node style bb]{0};&\node[node style bb]{0};&\node[node style bb]{0}; \\
\node[node style bb]{0};&\node[node style bb]{0};&\node[node style bb]{0};&\node[node style bb]{0};&\node[node style bb]{1};&\node[node style bb]{1}; \\
\node[node style bb]{0};&\node[node style bb]{0};&\node[node style bb]{0};&\node[node style ba]{0};&\node[node style aa]{1};&\node[node style aa]{0}; \\
  } ;
  
  \draw (m-1-4.north west) rectangle (m-4-4.south east);

\end{tikzpicture}
\]
\end{minipage}
\begin{minipage}{0.6\textwidth}\footnotesize
  \Qcircuit @C=.8em @R=1em {
  	\lstick{x_1}& \qw & \targ & \targ & \targ & \qw & \rstick{\!\!\!\!x_1\oplus x_2} \\
  	\lstick{x_2}& \targ & \qw & \ctrl{-1} & \qw & \qw & \rstick{\!\!\!\!x_2\oplus x_3} \\
  	\lstick{x_3}& \ctrl{-1} & \qw & \qw & \qw & \qw & \rstick{\!\!\!\!x_3} \\
  	\lstick{x_4}& \qw & \ctrl{-3} & \qw & \ctrl{-3} & \qw & \rstick{\!\!\!\!x_4}
  }
\end{minipage}


For the remaining two columns we first zero out row $3$ by applying a $\cnot$ gate between bits $3$ and $1$, then finally descend into the cofactors on the last row.

\hspace*{-8em}
 \begin{minipage}{0.8\textwidth}
\[
\begin{tikzpicture}
  \matrix (m)[
    matrix of math nodes,
    left delimiter  = (,
    right delimiter = ),
  ] {
\node[node style bb]{0};&\node[node style bb]{1};&\node[node style bb]{1};&\node[node style bb]{1};&\node[node style bb]{1};&\node[node style bb]{1}; \\
\node[node style bb]{1};&\node[node style bb]{0};&\node[node style bb]{0};&\node[node style bb]{0};&\node[node style bb]{0};&\node[node style bb]{0}; \\
\node[node style bb]{0};&\node[node style bb]{0};&\node[node style bb]{0};&\node[node style bb]{0};&\node[node style bb]{1};&\node[node style bb]{1}; \\
\node[node style bb]{0};&\node[node style bb]{0};&\node[node style bb]{0};&\node[node style bb]{0};&1&0 \\
  } ;

  \draw (m-1-5.north west) rectangle (m-4-6.south east);

\end{tikzpicture}
\raisebox{2.7em}{$\rightarrow$} 
\begin{tikzpicture}
  \matrix (m)[
    matrix of math nodes,
    left delimiter  = (,
    right delimiter = ),
  ] {
\node[node style bb]{0};&\node[node style bb]{1};&\node[node style bb]{1};&\node[node style bb]{1};&\node[node style bb]{1};&\node[node style bb]{1}; \\
\node[node style bb]{1};&\node[node style bb]{0};&\node[node style bb]{0};&\node[node style bb]{0};&\node[node style bb]{0};&\node[node style bb]{0}; \\
\node[node style bb]{0};&\node[node style bb]{0};&\node[node style bb]{0};&\node[node style bb]{0};&\node[node style ba]{0};&\node[node style ba]{0}; \\
\node[node style bb]{0};&\node[node style bb]{0};&\node[node style bb]{0};&\node[node style bb]{0};&1&0 \\
  } ;

  \draw (m-1-5.north west) rectangle (m-4-6.south east);

\end{tikzpicture}
\]
\end{minipage}
\begin{minipage}{0.6\textwidth}\footnotesize
  \Qcircuit @C=.8em @R=1em {
  	\lstick{x_1}& \qw & \targ & \targ & \targ & \targ & \qw & \rstick{\!\!\!\!x_1\oplus x_2\oplus x_3} \\
  	\lstick{x_2}& \targ & \qw & \ctrl{-1} & \qw & \qw & \qw & \rstick{\!\!\!\!x_2\oplus x_3} \\
  	\lstick{x_3}& \ctrl{-1} & \qw & \qw & \qw & \ctrl{-2} & \qw & \rstick{\!\!\!\!x_3} \\
  	\lstick{x_4}& \qw & \ctrl{-3} & \qw & \ctrl{-3} & \qw & \qw & \rstick{\!\!\!\!x_4}
  }
\end{minipage}


\hspace*{-8em}
 \begin{minipage}{0.8\textwidth}
\[
\begin{tikzpicture}
  \matrix (m)[
    matrix of math nodes,
    left delimiter  = (,
    right delimiter = ),
  ] {
\node[node style bb]{0};&\node[node style bb]{1};&\node[node style bb]{1};&\node[node style bb]{1};&\node[node style bb]{1};&\node[node style bb]{1}; \\
\node[node style bb]{1};&\node[node style bb]{0};&\node[node style bb]{0};&\node[node style bb]{0};&\node[node style bb]{0};&\node[node style bb]{0}; \\
\node[node style bb]{0};&\node[node style bb]{0};&\node[node style bb]{0};&\node[node style bb]{0};&\node[node style bb]{0};&\node[node style bb]{0}; \\
\node[node style bb]{0};&\node[node style bb]{0};&\node[node style bb]{0};&\node[node style bb]{0};&\node[node style bb]{1};&\node[node style bb]{0}; \\
  } ;

  \draw (m-1-6.north west) rectangle (m-4-6.south east);

\end{tikzpicture}
\]
\end{minipage}
\begin{minipage}{0.6\textwidth}\footnotesize
  \Qcircuit @C=.8em @R=1em {
  	\lstick{x_1}& \qw & \targ & \targ & \targ & \targ & \qw & \rstick{\!\!\!\!x_1\oplus x_2\oplus x_3} \\
  	\lstick{x_2}& \targ & \qw & \ctrl{-1} & \qw & \qw & \qw & \rstick{\!\!\!\!x_2\oplus x_3} \\
  	\lstick{x_3}& \ctrl{-1} & \qw & \qw & \qw & \ctrl{-2} & \qw & \rstick{\!\!\!\!x_3} \\
  	\lstick{x_4}& \qw & \ctrl{-3} & \qw & \ctrl{-3} & \qw & \qw & \rstick{\!\!\!\!x_4}
  }
\end{minipage}


\hspace*{-8em}
 \begin{minipage}{0.8\textwidth}
\[
\begin{tikzpicture}
  \matrix (m)[
    matrix of math nodes,
    left delimiter  = (,
    right delimiter = ),
  ] {
\node[node style bb]{0};&\node[node style bb]{1};&\node[node style bb]{1};&\node[node style bb]{1};&\node[node style bb]{1};&\node[node style bb]{1}; \\
\node[node style bb]{1};&\node[node style bb]{0};&\node[node style bb]{0};&\node[node style bb]{0};&\node[node style bb]{0};&\node[node style bb]{0}; \\
\node[node style bb]{0};&\node[node style bb]{0};&\node[node style bb]{0};&\node[node style bb]{0};&\node[node style bb]{0};&\node[node style bb]{0}; \\
\node[node style bb]{0};&\node[node style bb]{0};&\node[node style bb]{0};&\node[node style bb]{0};&\node[node style bb]{1};&\node[node style bb]{0}; \\
  } ;

  \draw (m-1-5.north west) rectangle (m-4-5.south east);

\end{tikzpicture}
\raisebox{2.7em}{$\rightarrow$} 
\begin{tikzpicture}
  \matrix (m)[
    matrix of math nodes,
    left delimiter  = (,
    right delimiter = ),
  ] {
\node[node style bb]{0};&\node[node style bb]{1};&\node[node style bb]{1};&\node[node style bb]{1};&\node[node style bb]{1};&\node[node style bb]{1}; \\
\node[node style bb]{1};&\node[node style bb]{0};&\node[node style bb]{0};&\node[node style bb]{0};&\node[node style bb]{0};&\node[node style bb]{0}; \\
\node[node style bb]{0};&\node[node style bb]{0};&\node[node style bb]{0};&\node[node style bb]{0};&\node[node style bb]{0};&\node[node style bb]{0}; \\
\node[node style bb]{0};&\node[node style bb]{0};&\node[node style bb]{0};&\node[node style bb]{0};&\node[node style ba]{0};&\node[node style bb]{0}; \\
  } ;

  \draw (m-1-5.north west) rectangle (m-4-5.south east);

\end{tikzpicture}
\]
\end{minipage}
\begin{minipage}{0.6\textwidth}\footnotesize
  \Qcircuit @C=.8em @R=1em {
  	\lstick{x_1}& \qw & \targ & \targ & \targ & \targ & \targ & \qw & \rstick{\!\!\!\!x_1\oplus x_2\oplus x_3\oplus x_4} \\
  	\lstick{x_2}& \targ & \qw & \ctrl{-1} & \qw & \qw & \qw & \qw & \rstick{\!\!\!\!x_2\oplus x_3} \\
  	\lstick{x_3}& \ctrl{-1} & \qw & \qw & \qw & \ctrl{-2} & \qw & \qw & \rstick{\!\!\!\!x_3} \\
  	\lstick{x_4}& \qw & \ctrl{-3} & \qw & \ctrl{-3} & \qw & \ctrl{-3} & \qw & \rstick{\!\!\!\!x_4}
  }
\end{minipage}

The overall linear transformation applied is
\[
  A = \begin{bmatrix} 1 & 1 & 0 & 1 \\ 0 & 1 & 1 & 0 \\ 0 & 0 & 1 & 0 \\ 0 & 0 & 0 & 1 \end{bmatrix}
\]
so the algorithm completes by appending a circuit computing $A^{-1}$. Inserting $T=\rz(1/8)$ gates in the relevant positions, we get the following circuit computing $\ket{\x}\mapsto e^{2\pi i f(\x)}\ket{\x}$:
{\footnotesize \[
{
  \Qcircuit @C=.8em @R=1em {
  	\lstick{x_1}& \qw & \qw & \gate{T} & \targ & \gate{T} & \targ & \targ & \gate{T} & \targ & \gate{T} & \targ & \gate{T} & \targ & \qw & \targ & \qw & \rstick{\!\!\!\!x_1} \\
  	\lstick{x_2}& \targ & \gate{T} & \qw & \qw & \qw & \ctrl{-1} & \qw & \qw & \qw & \qw & \qw & \qw & \ctrl{-1} & \targ & \qw & \qw & \rstick{\!\!\!\!x_2} \\
  	\lstick{x_3}& \ctrl{-1} & \qw & \qw & \qw & \qw & \qw & \qw & \qw & \ctrl{-2} & \qw & \qw & \qw & \qw & \ctrl{-1} & \qw & \qw & \rstick{\!\!\!\!x_3} \\
  	\lstick{x_4}& \qw & \qw & \qw & \ctrl{-3} & \qw & \qw & \ctrl{-3} & \qw & \qw & \qw & \ctrl{-3} & \qw & \qw & \qw & \ctrl{-3} & \qw & \rstick{\!\!\!\!x_4}
  }
  }
\]}

\end{example}

\begin{example}
\Cref{fig:ccz} shows a circuit implementing the doubly-controlled $Z$ gate, corresponding to an identity parity network for $\F^3\setminus\{000\}$, synthesized with \cref{alg:graysynth} followed by the Patel-Markov-Hayes algorithm. In this case both the parity network and the identity parity network are minimal, as verified by brute force search -- further, the circuit synthesized by \cref{alg:graysynth} reproduces exactly the minimal circuit for $\F^3\setminus\{000\}$ from \cite{wgma14}. In general, for any $n$ the identity parity network for $\F^n\setminus\{\0\}$ synthesized in this manner has the same structure, using $2^n-2$ $\cnot$ gates, compared to $2^n$ bit flips for the Gray code.
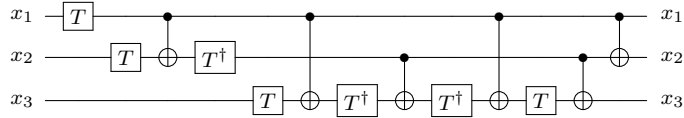
\begin{figure}
{\footnotesize\[
{
  \Qcircuit @C=.8em @R=.5em {
  	\lstick{x_1} & \gate{T} & \qw & \ctrl{1} & \qw & \qw & \ctrl{2} & \qw & \qw & \qw & \ctrl{2} & \qw& \qw & \ctrl{1} & \qw & \rstick{\!\!\!\!x_1} \\
  	\lstick{x_2} & \qw & \gate{T} & \targ & \gate{T^\dagger} & \qw & \qw & \qw & \ctrl{1} & \qw & \qw & \qw & \ctrl{1} & \targ & \qw & \rstick{\!\!\!\!x_2} \\
  	\lstick{x_3} & \qw & \qw & \qw & \qw & \gate{T} & \targ & \gate{T^\dagger} & \targ & \gate{T^\dagger} & \targ & \gate{T} & \targ & \qw & \qw & \rstick{\!\!\!\!x_3}
  }}
\]}
\caption{Circuit implementing the doubly-controlled $Z$ gate $\Lambda_2(Z)$ synthesized with \cref{alg:graysynth}. The $\cnot$-minimal Fourier expansion in this case gives $S=\F^3\setminus\{000\}$, and $A=I$.}
\label{fig:ccz}
\end{figure}
\end{example}

\begin{example}
If instead of $S=\F^n\setminus\{\0\}$ we have $S\iso \F^m$ for some $m<n$, \cref{alg:graysynth} instead gives a circuit corresponding directly to the Gray code. In particular, \cref{fig:grayskel} shows a parity network for $S = \{ (\y \concat 1) \mid \y\in\F^3\}\iso \F^3$ synthesized with \cref{alg:graysynth}, where $S\iso \F^3$. In this case it can be observed that the controls of the $\cnot$ gates are exactly the bits flipped in a Gray code for $\F^3$. Further, it may be noted that this is a minimal size parity network for $S$, and is in fact a fixed-target parity network.

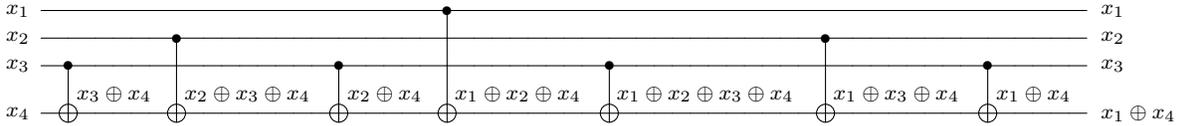
\begin{figure}
{\footnotesize\[\hspace*{-2.5em}
{
  \Qcircuit @C=.8em @R=1em {
  	\lstick{x_1} & \qw & \qw& \qw& \qw& \qw& \qw& \qw& \qw& \qw& \qw& \qw& \qw& \qw& \qw & \qw & \qw & \qw & \qw & \ctrl{3} & \qw & \qw& \qw& \qw& \qw& \qw& \qw& \qw& \qw& \qw& \qw& \qw& \qw& \qw& \qw& \qw& \qw& \qw& \qw& \qw& \qw& \qw& \qw& \qw& \qw& \qw & \qw & \qw & \qw & \qw & \qw & \qw & \rstick{\!\!\!\!x_1} \\
  	\lstick{x_2} & \qw & \qw& \qw& \qw& \qw & \ctrl{2} & \qw & \qw& \qw& \qw& \qw& \qw& \qw& \qw& \qw& \qw& \qw& \qw& \qw& \qw& \qw& \qw& \qw& \qw& \qw& \qw& \qw& \qw& \qw& \qw& \qw& \qw & \qw & \qw & \qw & \qw & \qw & \ctrl{2} & \qw& \qw& \qw& \qw& \qw& \qw& \qw& \qw& \qw& \qw & \qw & \qw & \qw & \rstick{\!\!\!\!x_2}  \\
  	\lstick{x_3} & \ctrl{1} & \qw& \qw& \qw& \qw& \qw& \qw& \qw& \qw& \qw& \qw & \qw & \qw & \ctrl{1} & \qw& \qw& \qw& \qw& \qw& \qw& \qw& \qw& \qw& \qw & \qw & \qw & \ctrl{1} & \qw& \qw& \qw& \qw& \qw& \qw& \qw& \qw& \qw& \qw& \qw& \qw& \qw& \qw& \qw& \qw & \qw & \qw & \ctrl{1} & \qw& \qw& \qw& \qw & \qw & \rstick{\!\!\!\!x_3}  \\
  	\lstick{x_4} & \targ & \ustick{\;\;\;\;\;\;x_3\oplus x_4}\qw & \push{\rule{0em}{2em}}\qw& \qw& \qw& \targ & \ustick{\;\;\;\;\;\;\;\;\;\;\;\;\;\;x_2\oplus x_3\oplus x_4}\qw & \qw& \qw& \qw& \qw& \qw& \qw& \targ & \ustick{\;\;\;\;\;\;x_2\oplus x_4}\qw & \qw& \qw& \qw& \targ & \ustick{\;\;\;\;\;\;\;\;\;\;\;\;\;\;x_1\oplus x_2\oplus x_4}\qw & \qw& \qw& \qw& \qw& \qw& \qw& \targ & \ustick{\;\;\;\;\;\;\;\;\;\;\;\;\;\;\;\;\;\;\;\;\;\;x_1\oplus x_2\oplus x_3\oplus x_4}\qw & \qw& \qw& \qw& \qw& \qw& \qw& \qw& \qw& \qw& \targ & \ustick{\;\;\;\;\;\;\;\;\;\;\;\;\;\;x_1\oplus x_3\oplus x_4}\qw & \qw& \qw& \qw& \qw& \qw& \qw& \targ & \ustick{\;\;\;\;\;\;x_1\oplus x_4}\qw & \qw& \qw& \qw& \qw & \rstick{\!\!\!\!x_1\oplus x_4}
  }}
\]}
\caption{Annotated parity network for the set $S = \{ (\y \concat 1) \mid \y\in\F^3\}$. Note that the parity network corresponds exactly to the Gray code on $\F^3$.}
\label{fig:grayskel}
\end{figure}
\end{example}

\subsection{Synthesis with encoded inputs}\label{sec:encoded}

Given an encoder $E\in\F^{m\times n}$, we can use \cref{alg:graysynth} to synthesize a parity network for some set $S$ with encoded inputs as follows. Recall that a parity network for $S\subseteq \F^n$ with inputs encoded by $E$ corresponds to a parity network for some set $S'\subseteq \F^m$ such that for any $\y\in S$, there exists $\w\in S'$ where $E^T\w = \y$. While finding the minimal such $\w$ would require solving the NP-hard Maximum-likelihood decoding problem, we can efficiently compute \emph{some} $\w$ using a \emph{generalized inverse}.

Recall that a generalized inverse $A^g$ of a matrix $A\in\F^{m\times n}$ is an $n$ by $m$ matrix over $\F$ such that
\[
  AA^gA = A,
\]
and in particular $AA^g\y = \y$ whenever there exists $\x$ such that $A\x = \y$. We use such a generalized inverse rather than a specific inverse such as the Moore-Penrose pseudoinverse, as the latter typically does not exist over finite fields. By contrast, for $A\in \F^{m\times n}$, a generalized inverse $A^g$ may be computed \cite{cm09} by finding invertible matrices $P, Q$ over $\F$ such that
\[
  A = P\begin{bmatrix} I_r & 0 \\ 0 & 0 \end{bmatrix} Q
\]
and computing 
\[
  A^g = Q^{-1}\begin{bmatrix} I_r & 0 \\ 0 & 0 \end{bmatrix} P^{-1}.
\]
$P$ and $Q$ may likewise be found by first reducing $A$ to row-echelon form, then reducing its transpose to row-echelon form.

It is worth noting that it may be possible to perform additional optimization by optimizing the set $S'$ with $A^g$. In particular, it is known \cite{cm09} that the set of solutions to the linear system $A\x = \y$ is given by $\{ A^g\y + (I - A^gA)\w \mid \w\in\F^m\}$, which may be possible to optimize with classical techniques. We tried brute-force optimizing the set $S'$ for some small instances and found negligible effects on overall $\cnot$ counts, though we leave it as an open question as to whether scalable sub-optimal methods reduce parity network sizes in large benchmarks.

\section{Evaluation}

We implemented \cref{alg:graysynth} in Haskell in the open-source quantum circuit toolkit {\sc Feynman}\footnote{\href{https://github.com/meamy/feynman/tree/graysynth}{https://github.com/meamy/feynman}}. Experiments were run in Debian Linux running on a quad-core 64-bit Intel Core i7 2.40 GHz processor and 8 GB RAM. 

\begin{figure}
\includegraphics[scale=0.8]{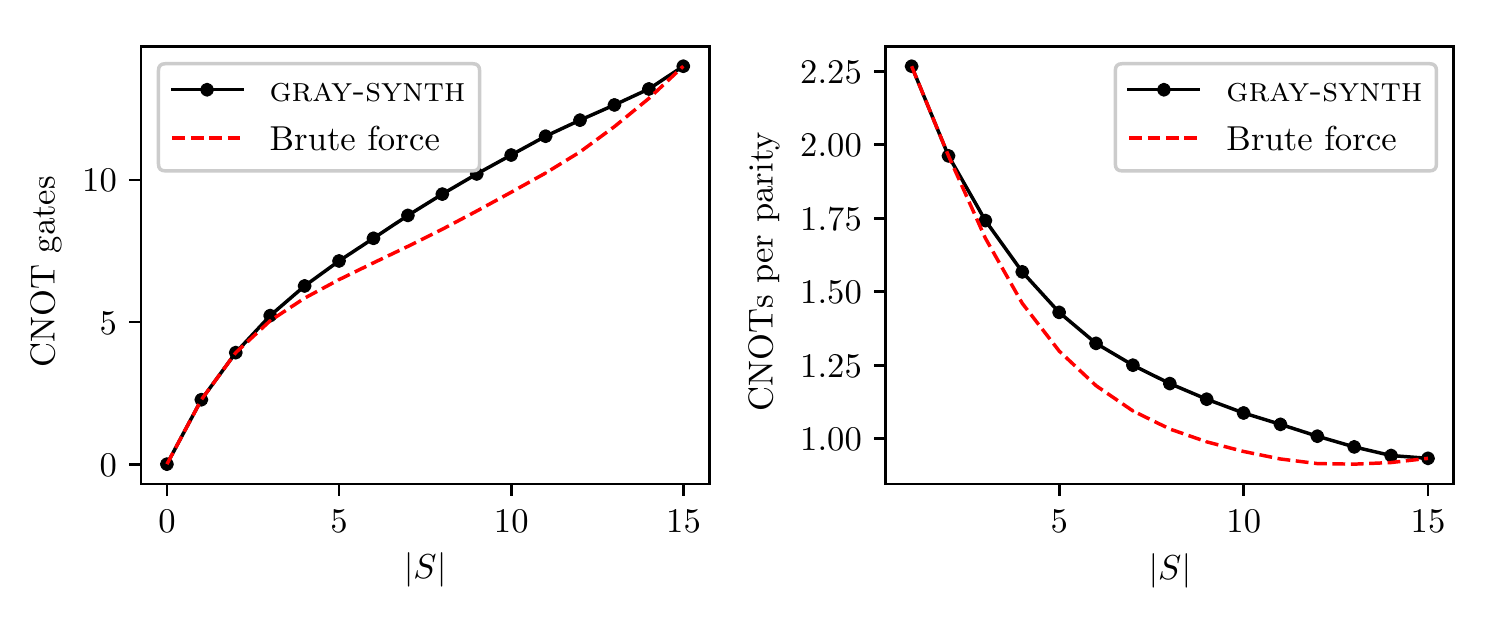}
\caption{Average $\cnot$ counts of parity networks as computed by \cref{alg:graysynth} and brute force minimization over all sets of $4$ bit parities.}
\label{fig:brute}
\end{figure}

We generated all $4$-bit minimal parity networks by brute force search and compared them with the parity networks generated with \cref{alg:graysynth}. \Cref{fig:brute} graphs the results, with $\cnot$ cost averaged over sets $S$ of parities with the same size. The results show that our algorithm synthesizes optimal or near-optimal networks for small and large sets $S$, and diverges slightly for sets of parities containing around half of the possible parities. The divergence peaks at $|S|=8$, exactly half of the $2^4$ parities, with \cref{alg:graysynth} coming within 15\% of the minimal number of $\cnot$ gates on average. On examining the structure of optimal parity networks for sets on which \cref{alg:graysynth} performed poorly, it appears that the optimal parity networks save on $\cnot$ cost by making more judicious use of \emph{shortcuts} -- leaving qubits in particular states to flip between distant parities in other bits quickly. In general we found that the optimal results in these cases \emph{were not} achievable just by using the {\sc gray-synth} algorithm with different index expansion orders. An effective synthesis algorithm may then be to combine \cref{alg:graysynth} for small and large sets with a different heuristic for sets $S$ of size close to $2^{n-1}$.

\subsection{Benchmarks}

To evaluate the performance of \cref{alg:graysynth} on practical quantum circuits, we implemented a variant of the $T$-par algorithm \cite{amm14} with \cref{alg:graysynth} replacing the original Matroid partitioning-based $\{\cnot, \rz\}$ synthesis, which optimized $T$-depth rather than $\cnot$-count. Circuits written over the Clifford+$T$ gate set are analyzed in sum-over-paths form and effectively split into alternating sequences of $\{\cnot, \rz\}$ gates and Hadamard gates, the former of which are synthesized with \cref{alg:graysynth}. As the input basis state of a particular $\{\cnot, \rz\}$ sub-circuit is expressed by an encoding of the entire circuit's inputs as well as \emph{path variables} \cite{amm14, dhmhno05}, we use the method of \cref{sec:encoded} to synthesize the circuit. As \cref{alg:graysynth} is only used to perform the $\{\cnot, \rz\}$ synthesis sub-routine, our optimized circuits have the same $T$-count as the original $T$-par algorithm.

We ran our implementation on a suite of benchmarks previously used to evaluate quantum circuit optimizations \cite{amm14, am16}. Benchmark circuits containing Toffoli gates and multiply controlled Toffoli gates were expanded to the Clifford+$T$ gate set using the Toffoli gate decomposition from \cite{ammr13} with $7$ $\cnot$ gates, and the Nielsen-Chuang multiply controlled Toffoli decomposition \cite{nc00} giving $2(k-2)$ Toffoli gates per $k$-controlled Toffoli. While other decompositions, particularly with better $\cnot$ counts exist, we chose these standard decompositions to compare against previous results, which used the same decompositions. All but 4 optimized circuits\footnote{It was reported in \cite{a18} that 4 separate optimized benchmark circuits (CSLA-MUX\_3, Adder\_8, Mod-Mult\_55 and GF($2^{32}$)) provably contained errors. These errors have since been fixed and these circuits now pass verification.} (Mod-Adder$\_{1024}$, Cycle $17\_3$, GF($2^{64}$)-Mult and HWB$\_8$) were formally verified to be correct using the method of \cite{a18}, with verification failing on the remaining $4$ due to lack of memory. To the best of the authors' knowledge, this also represents the first quantum circuit optimization work where a majority of the optimized circuits have also been formally verified.

\afterpage{%
\clearpage
\thispagestyle{empty}
\begin{landscape}
\begin{table}
\footnotesize
\hspace*{-1em}
\begin{tabular}{lr>{\bf}rrrr>{\bf}rrrr>{\bf}rrr>{\bf}rrrr} \toprule
Benchmark & $n$ & \multicolumn{3}{c}{Base} & \multicolumn{4}{c}{$T$-par (matroids)} 
	& \multicolumn{3}{c}{Nam \etal (L)} & \multicolumn{5}{c}{$T$-par ({\sc gray-synth})} \\ 
\cmidrule(l{4pt}r{1pt}){3-5} \cmidrule(l{4pt}r{1pt}){6-9} \cmidrule(l{4pt}r{1pt}){10-12} \cmidrule(l{4pt}r{1pt}){13-17}
 & & $\cnot$ & $T$ & $T$-depth & Time (s) & $\cnot$ & $T$ & $T$-depth & Time & $\cnot$ & $T$ & Time & $\cnot$ & $T$ & $T$-depth & \% Red. \\
Grover$\_5$ & 9 & 336 & 336 & 144 & $0.028$ & 425 & 154 & 59 & -- & -- & -- & $0.001$ & 226 & 154 & 128 & 32.7  \\ 
Mod $5\_4$ & 5 & 32 & 28 & 12 & $0.001$ & 50 & 16 & 7 & $<0.001$ & 28 & 16 & $0.001$ & 26 & 16 & 15 & 18.8  \\ 
VBE-Adder$\_3$ & 10 & 80 & 70 & 24 & $0.004$ & 91 & 24 & 9 & $<0.001$ & 50 & 24 & $0.004$ & 46 & 24 & 22 & 42.5  \\ 
CSLA-MUX$\_3$ & 15 & 90 & 70 & 21 & $0.011$ & 213 & 62 & 13 & $<0.001$ & 76 & 64 & $0.073$ & 118 & 62 & 29 & -31.1  \\
CSUM-MUX$\_9$ & 30 & 196 & 196 & 18 & $0.038$ & 268 & 84 & 16 & $<0.001$ & 168 & 84 & $0.095$ & 148 & 84 & 23 & 24.5  \\ 
QCLA-Com$\_7$ & 24 & 215 & 203 & 27 & $0.055$ & 271 & 94 & 20 & $0.001$ & 132 & 95 & $0.097$ & 136 & 94 & 28 & 36.7  \\
QCLA-Mod$\_7$ & 26 & 441 & 413 & 66 & $0.099$ & 729 & 237 & 58 & $0.004$ & 302 & 237 & $0.145$ & 360 & 237 & 67 & 18.4  \\
QCLA-Adder$\_10$ & 36 & 267 & 238 & 24 & $0.076$ & 442 & 162 & 19 & $0.002$ & 195 & 162 & $0.112$ & 214 & 162 & 30 & 19.9  \\ 
Adder$\_{8}$ & 24 & 466 & 399 & 69 & $0.082$ & 654 & 215 & 54 & $0.004$ & 331 & 215 & $0.165$ & 359 & 215 & 73 & 23.0  \\
RC-Adder$\_{6}$ & 14 & 104 & 77 & 33 & $0.012$ & 133 & 47 & 14 & $<0.001$ & 73 & 47 & $0.080$ & 71 & 47 & 36 & 31.7  \\ 
Mod-Red$\_{21}$ & 11 & 122 & 119 & 48 & $0.076$ & 191 & 73 & 28 & $<0.001$ & 81 & 73 & $0.091$ & 86 & 73 & 59 & 29.5  \\
Mod-Mult$\_{55}$ & 9 & 55 & 49 & 15 & $0.003$ & 89 & 35 & 9 & $<0.001$ & 40 & 35 & $0.004$ & 40 & 35 & 20 & 27.3  \\ 
Mod-Adder$\_{1024}$ & 28 & 2005 & 1995 & 831 & $0.608$ & 2842 & 1011 & 249 & -- & -- & -- & $0.739$ & 1390 & 1011 & 863 & 30.7  \\ 
Mod-Adder$\_{1048576}$\!\!\!\! & 58 & 16680 & 16660 & 7292 & 39.565 & 26600 & 7339 & 5426 & -- & -- & -- & 12.272 & 11080 & 7339 & 6570 & 33.6  \\ 
Cycle $17\_3$ & 35 & 4532 & 4739 & 2001 & $2.431$ & 6547 & 1955 & 315 & -- & -- & -- & $2.618$ & 2968 & 1955 & 1857 & 37.4  \\ 
GF($2^4$)-Mult & 12 & 115 & 112 & 36 & $0.006$ & 197 & 68 & 14 & $0.001$ & 99 & 68 & $0.041$ & 106 & 68 & 39 & 7.8  \\
GF($2^5$)-Mult & 15 & 179 & 175 & 51 & $0.012$ & 334 & 111 & 17 & $0.001$ & 154 & 115 & $0.038$ & 163 & 111 & 53 & 8.9  \\
GF($2^6$)-Mult & 18 & 257 & 252 & 60 & $0.019$ & 484 & 150 & 22 & $0.003$ & 221 & 150 & $0.055$ & 235 & 150 & 63 & 8.6  \\
GF($2^7$)-Mult & 21 & 349 & 343 & 72 & $0.044$ & 713 & 217 & 27 & $0.004$ & 300 & 217 & $0.450$ & 319 & 217 & 75 & 8.6  \\
GF($2^8$)-Mult & 24 & 469 & 448 & 84 & $0.047$ & 932 & 264 & 31 & $0.006$ & 405 & 264 & $0.066$ & 428 & 264 & 87 & 8.7  \\
GF($2^9$)-Mult & 27 & 575 & 567 & 96 & $0.113$ & 1198 & 351 & 34 & $0.010$ & 494 & 351 & $0.076$ & 526 & 351 & 95 & 8.5  \\
GF($2^{10}$)-Mult & 30 & 709 & 700 & 108 & $0.147$ & 1462 & 410 & 36 & $0.009$ & 609 & 410 & $0.081$ & 648 & 410 & 109 & 8.6  \\
GF($2^{16}$)-Mult & 48 & 1837 & 1792 & 180 & $1.519$ & 4556 & 1040 & 65 & $0.065$ & 1581 & 1040 & $0.363$ & 1691 & 1040 & 585 & 7.9  \\ 
GF($2^{32}$)-Mult & 96 & 7292 & 7168 & 372 & $132.133$ & 22205 & 4128 & 126 & $1.834$ & 6299 & 4128 & $5.571$ & 6636 & 4128 & 2190 & 9.0   \\ 
GF($2^{64}$)-Mult & 192 & 28861 & 28672 & 756 & $19072.290$ & 105830 & 16448 & 256 & 58.341 & 24765 & 16448 & $114.310$  & 25934 & 16448 & 7716 & 10.1   \\ 
Ham$\_{15}$ (low) & 17 & 259 & 161 & 69 & $0.069$ & 376 & 97 & 38 & -- & -- & -- & $0.043$ & 208 & 97 & 83 & 19.7  \\
Ham$\_{15}$ (med) & 17 & 616 & 574 & 240 & $0.108$ & 695 & 242 & 98 & -- & -- & -- & $0.089$ & 357 & 242 & 201 & 42.0  \\
Ham$\_{15}$ (high) & 20 & 2500 & 2457 & 996 & $0.498$ & 3036 & 1021 & 424 & -- & -- & -- & $0.376$ & 1502 & 1021 & 837 & 39.9  \\
HWB$\_6$ & 7 & 131 & 105 & 45 & $0.006$ & 199 & 75 & 25 & -- & -- & -- & $0.029$ & 110 & 75 & 63 & 16.0  \\
HWB$\_8$ & 12 & 7508 & 5425 & 2106 & $1.879$ & 12428 & 3531 & 951 & -- & -- & -- & $1.706$ & 6861 & 3531 & 2752 & 8.6  \\
QFT$\_4$ & 5 & 48 & 69 & 48 & $0.006$ & 76 & 67 & 45 & -- & -- & -- & $0.005$ & 48 & 67 & 62 & 0.0  \\ 
$\Lambda_3(X)$ & 5 & 21 & 21 & 9 & $0.001$ & 32 & 15 & 8 & $<0.001$ & 14 & 15 & $<0.001$ & 14 & 15 & 12 & 33.3  \\
$\Lambda_3(X)$ (Barenco) & 5 & 28 & 28 & 12 & $0.015$ & 29 & 16 & 7 & $<0.001$ & 20 & 16 & $<0.001$ & 18 & 16 & 16 & 35.7  \\
$\Lambda_4(X)$ & 7 & 35 & 35 & 15 & $0.002$ & 55 & 23 & 7 & $<0.001$ & 22 & 23 & $0.001$ & 22 & 23 & 18 & 37.1  \\
$\Lambda_4(X)$ (Barenco) & 7 & 56 & 56 & 24 & $0.002$ & 71 & 28 & 11 & $<0.001$ & 40 & 28 & $<0.001$ & 36 & 28 & 26 & 35.7  \\
$\Lambda_5(X)$ & 9 & 49 & 49 & 21 & $0.004$ & 76 & 31 & 10 & $<0.001$ & 30 & 31 & $0.003$ & 30 & 31 & 24 & 38.8  \\
$\Lambda_5(X)$ (Barenco) & 9 & 84 & 84 & 36 & $0.004$ & 103 & 40 & 15 & $<0.001$ & 60 & 40 & $<0.001$ & 54 & 40 & 34 & 35.7  \\
$\Lambda_{10}(X)$ & 19 & 119 & 119 & 51 & $0.066$ & 167 & 71 & 21 & $<0.001$ & 70 & 71 & $0.071$ & 70 & 71 & 54 & 41.2  \\ 
$\Lambda_{10}(X)$ (Barenco) & 19 & 224 & 224 & 96 & $0.024$ & 271 & 100 & 25 & $0.001$ & 160 & 100 & $0.029$ & 144 & 100 & 81 & 35.7  \\ \midrule
\multicolumn{16}{l}{Total} & 22.6 \\ \bottomrule
\end{tabular}
\caption{Benchmark optimization results. Base gives the original circuit statistics, $T$-par (matroids) gives optimization results using \cite{amm14}, Nam \etal (L) reports the light optimization results from \cite{nrscm17} (where available), and $T$-par ({\sc gray-synth}) gives the results using \cref{alg:graysynth} instead of matroid partitioning. The \% reduction in $\cnot$ gates over the base circuit is reported in the last column.}\label{tab:results}
\end{table}
\end{landscape}
}

\Cref{tab:results} reports the results of our experiments. On average, \cref{alg:graysynth} resulted in an 23\% reduction of $\cnot$ gates, with 43\% reduction in the best case. In reality there may be more $\cnot$ reduction on average, as the algorithm performed relatively poorly on the Galois field multipliers, which comprise over a quarter of the benchmarks. Further, only 4 benchmarks took over a second to complete, lending evidence to the scalability of our method. One benchmark, CSLA-MUX$\_3$, did observe an \emph{increase} in $\cnot$ gates of $31\%$ -- this appears to be due to our sub-optimal method of generating a pointed parity network from a parity network, combined with the fact that very few $T$ gates cancel (that is, the support of the Fourier expansions synthesized have total size close to the number of $T$ gates in the original circuit). It may be possible to reduce the overhead in this case, and further reduce $\cnot$ counts in the other benchmarks, by synthesizing pointed parity network directly, rather than synthesizing a parity network followed by a linear permutation.

As \cref{alg:graysynth} effectively replaces the matroid partitioning algorithm used in $T$-par \cite{amm14}, optimizing $\cnot$ count rather than $T$-depth, we compared the time, $\cnot$ and $T$-depth tradeoffs of the two methods. The results are presented in \cref{tab:results}. While \cref{alg:graysynth} reduces the $\cnot$-count significantly compared to matroid partitioning, the $T$-depth is significantly increased in those cases, as expected. In fact, the $T$-depth and $\cnot$ count appear to be inversely proportional -- however, unlike matroid paritioning which increases the $\cnot$ count of all benchmarks, \cref{alg:graysynth} frequently reduces the $T$-depth of the original circuit. The runtimes are also significantly reduced in large circuits with a large number of $T$ gates, which is expected as the asymptotic complexity of the matroid partitioning algorithm used in $T$-par is significantly higher \cite{amm14}.

We also compared the results of our heuristic optimization algorithm to a recent heuristic by Nam, Ross, Su, Childs and Maslov \cite{nrscm17}, which was brought to our attention while preparing this manuscript. While their algorithm does not explicitly look at the problem of $\cnot$ minimization, they achieve significant $\cnot$ reductions by combining (among other techniques) a $T$-par style phase gate folding stage with optimized decompositions of specific gates and rule-based local rewrites. While their software is not open-source, when available the ``light'' optimization results reported in \cite{nrscm17} are given in \cref{tab:results}, as that algorithm most closely matches the scalability of ours. \Cref{alg:graysynth} typically results in similar $\cnot$ counts to their circuit optimizer, with \cref{alg:graysynth} reporting better $\cnot$ counts on some circuits (e.g., VE-Adder$\_3$, CSUM-MUX$\_9$) and worse on others (e.g., CSLA-MUX$\_3$, QCLA-Mod$\_7$). Additionally, it may be noted that in the case of the Galois field multipliers, the reductions Nam \etal achieve are from using base circuits with fewer $\cnot$ gates. As their optimizations rely largely on special-purpose synthesis and local rewrites, the techniques are complementary and so it may be possible to combine both to further reduce $\cnot$ counts. Further, most of their circuits are not verified in any way and hence may have errors, in comparison to ours which are almost all formally certified to be correct.

Nam \etal also report on the results of a ``heavy'' optimization algorithm which generally performs slightly better than \cref{alg:graysynth}, with the exception of the VBE-adder$\_3$ and Mod $5\_4$ benchmarks. This optimization however does not scale to the largest of our benchmark circuits, such as GF($2^{64}$)-Mult, as a result of the use of local rule-based rewrites to optimize $\cnot$-phase subcircuits. An interesting question is whether first performing a rough re-synthesis with \cref{alg:graysynth} would reduce run-times for the ``heavy'' local rewrites.

\section{Conclusion}

In this paper we have shown that the problem of synthesizing a $\cnot$-minimal circuit over $\cnot$ and arbitrary angle $\rz$ gates, is at least as hard as synthesizing a minimal size $\cnot$ circuit computing a set of parities of the inputs. We have moreover shown that this problem is in fact NP-complete for the cases when all $\cnot$ gates have a fixed target, and when the inputs are encoded in some larger state space. As a result it would appear that the problem of optimizing $\cnot$ counts in quantum circuits is intractable. We further presented a heuristic algorithm to solve this problem which is inspired by Gray codes. The algorithm, when used in a sum-over-paths quantum circuit optimizer ($T$-par) to re-synthesize $\cnot$ and $\rz$ sub-circuits, reduces $\cnot$ counts by 23\% on average on our set of benchmarks. While the heuristic does not significantly outperform special-purpose synthesis and rule-based rewriting techniques \cite{nrscm17} on arithmetic benchmarks, our technique is novel and complementary to rewriting, opening new avenues for optimizing $\cnot$ gates by direct synthesis of parity networks.

While we suspect that the problem of synthesizing minimal parity networks is intractable, we leave the exact complexity as an open question. The two-dimensional nature of the problem makes it an unnatural problem for reductions, as efficient circuits make use of ``shortcuts'' by computing strategic parities into various bits. On the other hand, it may turn out to be possible to find an efficient algorithm for synthesizing minimal parity networks. Likewise, while we have presented an effective heuristic algorithm for synthesizing parity networks, we leave developing a heuristic algorithm synthesizing small pointed parity networks directly as a possibility for future work.

As a final point of consideration, while we have studied the problem of optimizing $\cnot$ gates in an unrestricted architecture, physical chip designs typically have limited connectivity and can only apply $\cnot$ gates between connected qubits. While arbitrary $\cnot$ gates can be implemented with a sequence of $\cnot$ gates having length at most proportional to the diameter of the connectivity graph, it nonetheless remains possible that a better circuit may be found by synthesizing one directly for a given topology. A natural and important direction for future research is then to find $\cnot$ optimization algorithms which take into account such connectivity constraints, a problem which likewise appears to be computationally intractable \cite{hnd17}. Going in the other direction, it may be possible to find more optimizations at the logical level by taking into account particular fault-tolerant models -- for instance, in lattice surgery-based circuits where multi-target $\cnot$ gates have cost strictly less than separate $\cnot$ gates \cite{hnd17}.


\bibliographystyle{habbrv}
\bibliography{main}


\appendix

\section{Fourier expansions}\label{app:fourier}

The Fourier expansion of a pseudo-Boolean function used in this paper is not the one typically used in the analysis of Boolean functions \cite{od14}. In particular, the Boolean group is typically taken as the multiplicative group $\{-1, 1\}$ with the parity function $\parity{\y}:\{-1,1\}^n\rightarrow\{-1,1\}$ defined by $x_1^{y_1}x_2^{y_2}\cdots x_n^{y_n}$. The resulting expansion is then a \emph{multi-linear} polynomial in $\x$.

We can recover the standard Fourier expansion $f(\x) = \sum_{\y\in\F^n}\tilde{f}(\y)\widetilde{\parity{\y}}(\x)$ by defining $\widetilde{\parity{\y}}: \F^n\rightarrow \{-1, 1\}$ as
\[
  \widetilde{\parity{\y}}(\x) = (-1)^{x_1^{y_1}x_2^{y_2}\cdots x_n^{y_n}},
\]
in which case by observing that 
\[
  \fourier{f}(\y)\parity{\y}(\x) = \frac{1 + \fourier{f}(\y)}{2}\widetilde{\parity{\y}}(\x)
\]
  we see that 
\[
  \tilde{f}(\0) = \frac{2^n-1}{2}, \qquad \tilde{f}(\y) = \frac{1}{2}\fourier{f}(\y).
\]

\section{Proof of \cref{eq:incex}}\label{app:incex}

In this section we prove \cref{eq:incex}, namely that
$$ 2^{|\y|-1}\x^{\y} = \sum_{\y'\subseteq \y} (-1)^{|\y'|-1}\parity{\y'}(\x)$$
for any $\x,\y\in\F^n$.

\begin{lemma} For any $\x\in\F^n$, 
$$ 2^{n-1}x_1x_2\cdots x_n = \sum_{\y\in\F^n} (-1)^{|\y|-1}\parity{\y}(\x)$$
\end{lemma}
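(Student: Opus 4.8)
The plan is to reduce the claim to a product over coordinates by passing from the additive group $\F$ to the multiplicative group $\{-1,1\}$. The key elementary fact is that, since $\parity{\y}(\x)\in\{0,1\}$ and $(-1)^a$ depends only on $a\bmod 2$, one has $\prod_{i=1}^n(-1)^{x_iy_i}=(-1)^{\sum_i x_iy_i}=(-1)^{\parity{\y}(\x)}$, and hence $\parity{\y}(\x)=\tfrac12\bigl(1-\prod_{i=1}^n(-1)^{x_iy_i}\bigr)$ as an integer. Substituting this into the right-hand side turns the sum into one whose summand is a product of single-variable factors, so it factors over the $n$ coordinates.

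Concretely, using $(-1)^{|\y|-1}=-(-1)^{|\y|}$, after substitution the right-hand side becomes
\[
  -\frac12\sum_{\y\in\F^n}(-1)^{|\y|}\;+\;\frac12\sum_{\y\in\F^n}(-1)^{|\y|}\prod_{i=1}^n(-1)^{x_iy_i}.
\]
For the first sum, $\sum_{\y\in\F^n}(-1)^{|\y|}=\prod_{i=1}^n\bigl(\sum_{y_i\in\F}(-1)^{y_i}\bigr)=\prod_{i=1}^n(1-1)=0$ since $n\ge1$. For the second sum, combining exponents gives $(-1)^{|\y|}\prod_i(-1)^{x_iy_i}=\prod_i(-1)^{y_i(1+x_i)}$, so the sum factors as $\prod_{i=1}^n\bigl(\sum_{y_i\in\F}(-1)^{y_i(1+x_i)}\bigr)=\prod_{i=1}^n\bigl(1+(-1)^{1+x_i}\bigr)$. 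Each factor equals $0$ when $x_i=0$ and $2$ when $x_i=1$, so the product equals $2^n x_1x_2\cdots x_n$. Combining the two pieces yields $\tfrac12\cdot 2^n x_1\cdots x_n=2^{n-1}x_1\cdots x_n$, as required.

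I do not expect a real obstacle here; the only points requiring care are (i) justifying that $\parity{\y}(\x)$, which is a priori an element of $\F$, is treated as the integer $0$ or $1$ on the right-hand side (this is implicit in the statement), and (ii) the interchange of sum and product, which is valid because the summand is a product of single-variable factors. As an alternative one could argue by casework on $w=\card{\supp(\x)}$: $\parity{\y}(\x)$ depends only on $\y$ restricted to $\supp(\x)$, the sum over the remaining $n-w$ coordinates contributes a factor $(1-1)^{n-w}$ which vanishes unless $w=n$, and when $\x=\1$ one is left with the number of odd-weight vectors in $\F^n$, namely $2^{n-1}$; this matches the left-hand side, which is $0$ unless $\x=\1$.
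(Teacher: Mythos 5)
Your proof is correct, and it takes a genuinely different route from the paper's. The paper argues by induction on $n$: it uses the identity $x + y - (x\oplus y) = 2xy$ to split the top-degree monomial $2^{k}x_1\cdots x_{k+1}$ into three $k$-variable instances, applies the induction hypothesis to each, and recombines the resulting sums. You instead pass to multiplicative characters: writing $\parity{\y}(\x) = \tfrac12\bigl(1-\prod_{i=1}^n(-1)^{x_iy_i}\bigr)$ (valid once $\parity{\y}(\x)$ is read as the integer $0$ or $1$, which is indeed the intended reading of the identity), the sum over $\y$ factorizes coordinatewise, and both resulting products are evaluated directly; no induction is needed. This is essentially the standard orthogonality-of-characters computation, and it dovetails with the paper's own remark (Appendix A) relating its $\{0,1\}$-based expansion to the usual $\{-1,1\}$ Fourier picture. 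What each buys: your calculation is shorter, non-inductive, and makes the appearance of the factor $2^{n-1}$ transparent (each coordinate with $x_i=1$ contributes a factor $2$, halved once by the $\tfrac12$); the paper's induction stays entirely within the additive $\oplus$ presentation, never invoking the sign-character correspondence, which keeps the appendix self-contained in the formalism used throughout the body. Your closing casework variant (vanishing of the sum off $\x=\1$, and counting the $2^{n-1}$ odd-weight vectors at $\x=\1$) is also a valid third argument. One small point of care, which you already flag: the interchange justifying $\sum_{\y}\prod_i(-1)^{y_i(1+x_i)} = \prod_i\bigl(1+(-1)^{1+x_i}\bigr)$ is the distributivity of a finite sum over a product of single-coordinate factors, so nothing further is needed there.
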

\begin{proof}
Clearly the formula is satisfied for $n=1$. Now consider $n=k+1$ for some $k$. Using the identity $x + y - (x\oplus y) = 2xy$ for any $x,y\in\F$ and basic arithmetic we observe that
\begin{align*}
  2^{k}x_1x_2\cdots x_{k+1}
  	&= 2^{k-1}x_{1}x_{2}\cdots(2x_{k}x_{k+1}) \\
  	&= 2^{k-1}x_{1}x_{2}\cdots(x_{k} + x_{k+1} - (x_{k} \oplus x_{k+1})) \\
  	&= 2^{k-1}x_{1}x_{2}\cdots x_{k} + 
  	      2^{k-1}x_{1}x_{2}\cdots x_{k+1} - 
  	      2^{k-1}x_{1}x_{2}\cdots(x_{k} \oplus x_{k+1})
\end{align*}

Next we define the length $k$ vectors $\x', \x'',\x'''\in\F^n$ as follows:
\[
 x'_i = x_i, \qquad
 x''_i = \begin{cases} x_{k+1} \text{ if } i=k \\ x_i \text{ otherwise} \end{cases}, \qquad
 x'''_i = \begin{cases} x_k \oplus x_{k+1} \text{ if } i=k \\ x_i \text{ otherwise} \end{cases}
\] 
By induction we see that
\begin{align*}
  2^{k}&x_1x_2\cdots x_{k+1} \\
    &= 2^{k-1}x'_{1}x'_{2}\cdots x'_k + 
  	  2^{k-1}x''_{1}x''_{2}\cdots x''_k - 
  	  2^{k-1}x'''_{1}x'''_{2}\cdots x'''_k \\
    &= \sum_{\y\in\F^k} (-1)^{|\y|-1}\left(\parity{\y}(\x') + \parity{\y}(\x'') - \parity{\y}(\x''')\right) \\
    &= \sum_{\substack{\y\in\F^k, \\ y_k = 0}} (-1)^{|\y|-1}\parity{\y}(\x') + \sum_{\substack{\y\in\F^k, \\ y_k = 1}} (-1)^{|\y|-1}\left(\parity{\y}(\x') + \parity{\y}(\x'') - \parity{\y}(\x''')\right) \\
    &= \sum_{\substack{\y\in\F^k, \\ y_k = 0}} (-1)^{|\y|-1}\left(\parity{\y}(\x')-\parity{\y}(\x')\oplus x_k - \parity{\y}(\x')\oplus x_{k+1} + \parity{\y}(\x') \oplus x_k\oplus x_{k+1}\right) \\
    &=\sum_{\y\in\F^{k+1}}(-1)^{|\y|-1}\parity{\y}(\x)
\end{align*}
\end{proof}

\begin{corollary}
For any $\x, \y\in\F^n$,
$$ 2^{|\y|-1}\x^{\y} = \sum_{\y'\subseteq \y} (-1)^{|\y'|-1}\parity{\y'}(\x)$$
\end{corollary}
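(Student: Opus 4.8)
The plan is to obtain the corollary from the preceding Lemma almost for free, by restricting attention to the coordinates that actually occur in the monomial $\x^{\y}$. Write $T = \supp(\y)$, so that $|T| = |\y|$ and, since $x_i^0 = 1$ and $x_i^1 = x_i$ for $x_i\in\F$, we have $\x^{\y} = x_1^{y_1}x_2^{y_2}\cdots x_n^{y_n} = \prod_{i\in T} x_i$. The right-hand side of the claimed identity runs over $\y'\subseteq\y$, that is, over binary strings supported inside $T$, and each such parity $\parity{\y'}$ ignores all coordinates of $\x$ outside $T$. So the statement is exactly the Lemma ``localized'' to the $|T|$ coordinates indexed by $T$; note that the Lemma's all-ones case $x_1x_2\cdots x_n = \x^{\1}$ is precisely the $\y=\1$ instance of the corollary.

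First I would make the localization precise: let $\x_T\in\F^{|T|}$ denote the subvector of $\x$ consisting of the entries $x_i$ with $i\in T$, listed in increasing order of index. Applying the Lemma with $n$ replaced by $|T|$ (taking $\y\neq\0$, the case $\y=\0$ not being needed elsewhere) gives
\[
  2^{|T|-1}\prod_{i\in T} x_i = \sum_{\z\in\F^{|T|}} (-1)^{|\z|-1}\parity{\z}(\x_T).
\]
Next I would set up the obvious bijection between $\F^{|T|}$ and $\{\y'\in\F^n \mid \y'\subseteq\y\}$: a string $\z\in\F^{|T|}$ corresponds to the unique $\y'\in\F^n$ supported on $T$ whose pattern along $T$ is $\z$. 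Under this map $|\z| = |\y'|$, and because $\parity{\y'}$ depends only on coordinates in $T$ we get $\parity{\z}(\x_T) = \parity{\y'}(\x)$. Substituting, and recalling $|T| = |\y|$ together with $\prod_{i\in T}x_i = \x^{\y}$, yields
\[
  2^{|\y|-1}\x^{\y} = \sum_{\y'\subseteq\y} (-1)^{|\y'|-1}\parity{\y'}(\x),
\]
which is the corollary.

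I do not expect a genuine mathematical obstacle here: the content is entirely in the Lemma, and the only thing to be careful about is stating the restriction map $\x\mapsto\x_T$ and the subset bijection cleanly enough that the index substitution above is unambiguous. If one preferred a self-contained argument that avoids re-indexing, an alternative is to redo the Lemma's induction directly on $|\y|$, peeling off a coordinate $i$ with $y_i=1$ via $x_ix_j = \frac{1}{2}(x_i + x_j - (x_i\oplus x_j))$ exactly as in that inductive step; I expect this to be noticeably more tedious than the reduction, so I would use it only as a fallback.
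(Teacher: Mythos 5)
Your proposal is correct and follows essentially the same route the paper intends: the corollary is obtained from the Lemma by restricting to the coordinates in the support of $\y$, which the paper leaves implicit and you simply spell out via the subvector $\x_T$ and the subset bijection. Your aside excluding $\y=\0$ is a sensible precaution (the identity as literally stated degenerates there), and nothing further is needed.
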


\end{document}